\newtheorem{theorem}{Theorem}
\newtheorem{lemma}[theorem]{Lemma}
\newtheorem{remark}{Remark}
\newtheorem{definition}{Definition}
\newtheorem{proposition}[theorem]{Proposition}
\DeclareSymbolFont{bbold}{U}{bbold}{m}{n}
\DeclareSymbolFontAlphabet{\mathbbold}{bbold}
\newcommand{\vect}[1]{\mathbbold{#1}}
\newcommand{\qed}{\hfill $\blacksquare$}
\newcommand{\Med}{\textup{Med}}
\newcommand{\M}{\mathcal{M}}
\newcommand{\N}{\mathcal{N}}
\newcommand{\G}{\mathcal{G}}
\newcommand{\V}{\mathcal{V}}
\newcommand{\U}{\mathcal{U}}
\newcommand{\E}{\mathcal{E}}
\newcommand{\Gdecisive}{\G_{\textup{decisive}}}
\def\BibTeX{{\rm B\kern-.05em{\sc i\kern-.025em b}\kern-.08em
    T\kern-.1667em\lower.7ex\hbox{E}\kern-.125emX}}
\begin{document}
\title{Convergence, Consensus and Dissensus in the Weighted-Median Opinion Dynamics}
\author{Wenjun Mei, \IEEEmembership{Member, IEEE}, Julien M. Hendrickx, \IEEEmembership{Member, IEEE}, Ge Chen, \IEEEmembership{Member, IEEE}, Francesco Bullo, \IEEEmembership{Fellow, IEEE} and Florian D\"orfler, \IEEEmembership{Member, IEEE}
\thanks{This work was supported in part by the National Natural Science Foundation of China under Grants No. 72201008, No. 72131001, No. 72192804, and No. 12071465, the U.S. Army Research Office under Grant No. W911NF-22-1-0233, the RevealFlight Concerted Research Action (ARC) of the Federation Wallonie-Bruxelles, the Incentive Grant for Scientific Research (MIS) “Learning from Pairwise Data” of the F.R.S.-FNRS, as well as ETH Zurich funds.}
\thanks{W. Mei is with the Department of Mechanics and Engineering Science, Peking University (e-mail: mei@pku.edu.cn). J. M. Hendrickx is with the Institute of Information and Communication Technologies, Electronics, and Applied Mathematics, UCLouvain (e-mail: julien.hendrickx@uclouvain.be).G. Chen is with the Academy of Mathematics and Systems Science, Chinese Academy of Sciences (e-mail: chenge@amss.ac.cn). F. Bullo is with the Center of Control, Dynamical-Systems and Computation, University of California at Santa Barbara (e-mail: bullo@ucsb.edu). F. D\"orfler is with Automatic Control Laboratory, ETH Zurich (e-mail: dorfler@ethz.ch).}
\thanks{The corresponding author is Wenjun Mei (e-mail: mei@pku.edu.cn).}}

\maketitle

\begin{abstract}
Mechanistic and tractable mathematical models play a key role in understanding how social influence shapes public opinions. Recently, a weighted-median mechanism has been proposed as a new micro-foundation of opinion dynamics and validated via experimental data. Numerical studies also indicate that this new mechanism recreates some non-trivial real-world features of opinion evolution. In this paper, we conduct a thorough theoretical analysis of the weighted-median opinion dynamics. We fully characterize the set of all equilibria, and we establish the almost-sure finite-time convergence for any initial condition. Moreover, we prove a necessary and sufficient graph-theoretic condition for the almost-sure convergence to consensus, as well as a sufficient graph-theoretic condition for almost-sure persistent dissensus. It turns out that the weighted-median opinion dynamics, despite its simplicity in form, exhibit rich dynamical behavior that depends on some delicate network structures. To complement our sufficient conditions for almost-sure dissensus, we further prove that, given the influence network, determining whether the system almost surely achieves persistent dissensus is NP-hard, which reflects the complexity the network topology contributes to opinion evolution. 
\end{abstract}

\begin{IEEEkeywords}
Social Networks, Opinion Dynamics, Weighted Median, Consensus
\end{IEEEkeywords}

\section{Introduction}\label{sec:introduction}
\subsection{Background and motivation}
Opinion dynamics study how individuals in groups change their opinions via interpersonal influence as well as the role of social network structures in shaping public opinions. Due to the complexity of social influence, a key to understanding opinion evolution in social groups relies on the construction of mechanistic and tractable mathematical models. A large class of opinion dynamics models have been proposed based on the classic French-DeGroot model~\cite{JRPF:56,MHDG:74}, which assumes that individuals update their opinions by taking some weighted averages of their social neighbors. 
However, as pointed out in a recent paper~\cite{WM-FB-GC-JH-FD:22}, the widely-adopted weighted-averaging mechanism features a non-negligibly unrealistic implication. That is, the attractiveness between any two individuals' opinions is linearly proportional to their opinion distance, which leads to consensus under mild network connectivity assumptions. 

In~\cite{WM-FB-GC-JH-FD:22}, the authors derive a new micro-foundation of opinion dynamics: the weighted-median mechanism. This new mechanism exhibits various desirable features: Empirical validation via a set of online experiment data indicates that, compared with the weighted-averaging mechanism, the new mechanism enjoys significantly lower errors in predicting individuals' opinion shifts driven by social influence. Numerical comparisons show that the weighted-median mechanism recreates some non-trivial features of real-world opinion evolution, which some widely-studied extensions of the French-DeGroot model do not fully capture. Those features include the decaying likelihood of reaching consensus with increasing group size or clustering coefficient; the emergence of various empirically observed public opinion distributions; and the pattern of how extreme opinions are located in social networks. Moreover, the weighted-median mechanism is independent of numerical representation of opinions but only requires them to be ordered. Therefore, it extends the applicability of opinion dynamics to multiple-choice issues with discrete and ordered options, e.g., political elections. 

Simulation studies indicate that opinion evolution via the weighted-median mechanism exhibits rich patterns dependent on some delicate network structures~\cite{WM-FB-GC-JH-FD:22}. However, its dynamical behavior remains to be rigorously analyzed. In terms of dynamical behavior, researchers on opinion dynamics primarily focus on the equilibrium set, the convergence of the systems, and graph-theoretic conditions for reaching consensus and persistent dissensus respectively. Typically, these properties are studied in the framework of consensus algorithms by leveraging certain network connectivity conditions.
Nevertheless, this framework does not apply to the analysis of the weighted-median opinion dynamics since more sophisticated network-structure properties rather than connectivity are involved. 

In this paper, we identify two important network structures that shape the asymptotic behavior of the weighted-median opinion dynamics: \emph{cohesive sets} and \emph{decisive links}. By establishing some important properties of these two structures, we conduct a thorough theoretical analysis of the weighted-median opinion dynamics, including fully characterizing the equilibrium set and establishing its almost-sure convergence in finite time. A necessary and sufficient graph-theoretic condition for asymptotic consensus is provided. Last but not least, by building connections between the weighted-median model and the monotone non-all-equal 3 satisfiability (NAE3SAT) problem, we prove that determining whether weighted-median opinion dynamics almost surely reach persistent dissensus is an NP-hard problem.


\subsection{Brief Review of Previous Averaging-Based Models}
The French-DeGroot model~\cite{JRPF:56,MHDG:74} is one of the earliest models of opinion dynamics. It assumes that individuals' opinions are denoted by real numbers and are updated by taking some weighted averages of their social neighbors' opinions. The interpersonal weights constitute an influence matrix, which in turn induces a directed and weighted graph called the influence network. The weighted-averaging mechanism implies overly large attractions between distant opinions, which drive the system to consensus under mild conditions. Essentially, the French-DeGroot model is a linear consensus algorithm. Spectral analysis of the interpersonal influence matrix indicates that The French-DeGroot model achieves asymptotic opinion consensus as long as the associated influence network has a globally reachable and aperiodic strongly connected component~\cite{FB:22}. This is an overly-simplified prediction since many real-world social systems, with their influence networks being connected, do not always achieve consensus.

To explain the prevalence of persistent dissensus, various important extensions have been proposed by introducing additional mechanisms. To name a few widely-studied representatives, the Friedkin-Johnsen (F-J) model~\cite{NEF-ECJ:90} assumes that individuals have persistent attachments to their initial opinions. The F-J model is a discrete-time linear system. Its convergence is established via matrix spectral analysis. In this model, a group reaches persistent dissensus if and only if at least two individuals start with distinct opinions. 

The Altafini model~\cite{CA:13} assumes the presence of negative weights in the influence network. This model is still a discrete-time linear system but the presence of negative weights adds some difficulty to spectral analysis. One way of analyzing such a system is to introduce an auxiliary ``lifted graph''~\cite{JMH:14}, which transforms the original Altafini model into a higher-dimensional linear system without negative coefficients. Spectral analysis of this augmented system indicates that the Altafini model achieves bipartite consensus if and only if the influence network is strongly connected and structurally balanced. For any strongly connected but structurally unbalanced network, all the individuals' opinions converge to the trivial state 0. 

The bounded-confidence models, including the synchronous Hegselmann-Krause (H-K) model~\cite{RH-UK:02} and the gossip-like Deffuant-Weisbuch (D-W) model~\cite{GD-DN-FA-GW:00}, assume that individuals only assign weights to opinions within certain distances from their own opinions. Such discontinuous truncation of social influence makes bounded-confidence models very challenging to analyze. In the case of homogeneous individuals, convergence and convergence rate are thoroughly studied in the framework of linear consensus algorithms with time-varying topologies~\cite{JD:01,AB-MB-BC-HLN:13,JZ-YH:13}. The behavior of the H-K model with heterogeneous individuals in general ``remain a mystery''~\cite{BC-CW:17}. Only the convergence under some specific conditions or with noises has been established~\cite{SRE-TB:15,AM-FB:11f,GC-WS-SD-YH:19}. For the heterogeneous D-W model, Chen et al.~\cite{GC-WS-WM-FB:18n} establish its almost-sure exponential convergence for certain range of model parameters by leveraging the technique of "transforming randomness into control inputs"~\cite{GC:17b}, which is also used in the analysis of the weighted-median opinion dynamics in this paper. 

The biased-assimilation model~\cite{PD-AG-DTL:13} introduces a highly non-linear modification of the French-DeGroot model to characterize the effect that individuals weigh confirming information more than dis-confirming information. Conditions for the convergence to polarization, persistent dissensus, or consensus are analyzed in~\cite{PD-AG-DTL:13} for specific types of networks. Some local stability and attractivity properties are established in~\cite{WX-MY-JL-MC-XS:20}.

To sum up, the models reviewed above exhibit one of two possible behaviors: almost-sure consensus or almost-sure dissensus, dependent on one specific condition, or they introduce highly nonlinear assumptions with the side effect that their convergence and consensus conditions become mathematically intractable. We refer to~\cite{AVP-RT:17,AVP-RT:18} for an insightful survey of recent progress in the modeling and analysis of opinion dynamics.

\subsection{Contribution}
As indicated by simulations in~\cite{WM-FB-GC-JH-FD:22}, the weighted-median opinion dynamics often lead to opinion clustering and exhibit rich dynamical behavior regarding the conditions for reaching consensus or dissensus. In this paper, we conduct a thorough analysis of the weighted-median opinion dynamics~\cite{WM-FB-GC-JH-FD:22} and identify two important network structures that shape the opinion evolution. The contributions of this paper include the following aspects.

Firstly, we fully characterize the set of all the equilibria. It turns out that the equilibria of the weighted-median opinion dynamics exhibit a clear pattern that depends on an important delicate structure of the influence network: the cohesive sets. The notion of cohesive set is first proposed by Morris~\cite{SM:00} and is widely adopted in the analysis of linear threshold models of network diffusion, e.g., see~\cite{EY-DA-AO:11}. In this paper, we adopt a special case in~\cite{SM:00} as the definition of cohesive sets, and provide some useful properties of it.

Secondly, we establish the almost-sure finite-time convergence of the weighted-median opinion dynamics, with respect to initial conditions and individual update sequences. Then we provide a necessary and sufficient graph-theoretic condition for the almost-sure convergence to consensus, and give a sufficient graph-theoretic condition for the convergence to almost-sure persistent dissensus. Note that almost-sure persistent dissensus is not automatic in the absence of almost-sure consensus, as there is a middle ground where different outcomes have positive probabilities. Theoretical analysis results indicate that, under various network structure conditions, the weighted-median model either almost surely converges to consensus, or almost surely converges to persistent dissensus, or has a non-zero probability of reaching dissensus, depending on the initial conditions. In addition, network connectivity conditions are not sufficient to guarantee consensus. Therefore, the weighted-median opinion dynamics are less likely to reach consensus and exhibit richer dynamical behavior than the French-DeGroot model.

Thirdly, we show that, given an influence network, determining whether the weighted-median opinion dynamics almost surely achieve persistent dissensus is equivalent to determining that for a finite set of initial conditions. We then further prove that the latter problem is NP-hard by relating it to the monotone non-all-equal 3 satisfiability (NAE3SAT) problem, which is known to be NP-hard. This result reflects the complexity that network topology contributes to the dynamical behavior of the weighted-median model, which highlights another important difference with averaging-based consensus problems and shows an arguably more benign behavior.

\subsection{Organization}
The rest of this paper is organized as follows. Section II introduces some basic definitions and notions, as well as the model setup of the weighted-median opinion dynamics. Section III presents all the theoretical analysis and proofs of the main results. Section IV is the conclusion. Proofs of lemmas are provided in the appendices.

\section{Basic Definitions and Model Setup}
Let $\subseteq$ and $\subset$ be the symbols for subset and proper subset respectively. Denote by $\mathbb{N}$ the set of natural numbers, i.e., $\mathbb{N}=\{0,1,2,\dots\}$. Let $\mathbb{Z}$ (resp. $\mathbb{Z}_+$) be the set of (resp. positive) integers. Let $\vect{1}_n$ and $\vect{0}_n$ be the $n$-dimension vector whose entries are all ones and all zeros respectively.

Denote by $\G(W)$ the directed and weighted graph associated with the adjacency matrix $W$. In this paper we use the terms ``graph'' and ``network'' interchangeably. Suppose there are $n$ nodes on the graph, i.e., $W=(w_{ij})_{n\times n}$. Let $\V = \{1,\dots,n\}$ be the index set of the nodes. Denote by $\N_i$ the set of node $i$'s out-neighbors, i.e., $\N_i = \{j\in \mathcal{V}|w_{ij}\neq 0\}$, which includes node $i$ itself if $w_{ii}\neq 0$. A network $\G(W)$ is referred to as an \emph{influence network} if the nodes on $\G(W)$ represent individuals and any $(i,j)$-entry of the \emph{influence matrix} $W$ represents how much individual $i$ is influenced by $j$. Conventionally, an influence matrix $W$ is assumed to be row-stochastic.

The formal definition of weighted median is given below.
\begin{definition}[Weighted median~\cite{WM-FB-GC-JH-FD:22}]\label{def:weighted-median-in-general}
Given any $n$-tuple of real values $x=(x_1,\dots, x_n)$ and any $n$-tuple of non-negative weights $w=(w_1,\dots, w_n)$ with $\sum_{i=1}^n w_i=1$, $x^*\in \{x_1,\dots, x_n\}$ is a weighted median of $x$ associated with the weights $w$ if $x^*$ satisfies 
   \begin{align*}
   \sum_{i:\, x_i<x^*} w_i \le 1/2,\quad \text{and} \quad \sum_{i:\, x_i>x^*}w_i \le 1/2.
   \end{align*}
\end{definition} 
\smallskip
For simplicity, we also say that $x^*$ is a weighted median of $x$ associated with $w$. 

The following lemma rephrases Appendix~A in~\cite{WM-FB-GC-JH-FD:22}. It presents some immediate results regarding the uniqueness of weighted median.
\begin{lemma}[Properties of weighted median]\label{lem:properties-weighted-median}
Given any $n$-tuple of real values $x=(x_1,\dots,x_n)$ and any $n$-tuple of the associated non-negative weights $w=(w_1,\dots, w_n)$ with $\sum_{i=1}^n w_i=1$, let $x_{(1)},x_{(2)},\dots,x_{(n)}$ be a re-ordering of $x_1,\dots, x_n$ such that $x_{(1)}\le x_{(2)}\le \dots \le x_{(n)}$. The weighted median of $x$ associated with $w$ is unique if and only if there exists $1<i^*<n$ such that 
\begin{align*}
\sum_{i=1}^{i^*-1} w_{(i)} < \frac{1}{2},\quad w_{(i^*)}>0,\quad \sum_{i=i^*+1}^n w_{(i)}<\frac{1}{2}.
\end{align*}
In this case, $x_{(i^*)}$ is the unique weighted median of $x$ associated with $w$. When such $i^*$ does not exists, there exist $1<\underline{i}<\overline{i}<n$ such that
\begin{align*}
\sum_{i=1}^{\underline{i}-1} w_{(i)}<\frac{1}{2},\,\,\, \sum_{i=1}^{\underline{i}}w_{(i)}=\sum_{i=1}^{\overline{i}} w_{(i)}=\frac{1}{2},\,\,\, \sum_{i=\overline{i}+1}^n w_{(i)} < \frac{1}{2},
\end{align*}
which also implies that $w_{(\underline{i}+1)}=\dots=w_{(\overline{i}-1)}=0$. In this case, $x_{\underline{i}},\, x_{\underline{i}+1},\, \dots, x_{\overline{i}}$ are all weighted medians of $x$ associated with $w$.
\end{lemma}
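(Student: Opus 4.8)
The statement is really about the cumulative weight function, so the plan is to make that explicit first. Fix the sorted values $x_{(1)}\le\cdots\le x_{(n)}$ with the correspondingly reordered weights $w_{(1)},\dots,w_{(n)}$, and put $S_0=0$ and $S_k=\sum_{j=1}^k w_{(j)}$ for $k\ge1$; this sequence is nondecreasing with $S_0=0<1/2<1=S_n$. First I would rewrite Definition~\ref{def:weighted-median-in-general} in these terms: for a value $v$ taken by some entry of $x$, let $a$ and $b$ be the smallest and largest sorted indices with $x_{(a)}=x_{(b)}=v$; then $\sum_{i:\,x_i<v}w_i=S_{a-1}$ and $\sum_{i:\,x_i>v}w_i=1-S_b$, so $v$ is a weighted median of $x$ associated with $w$ if and only if $S_{a-1}\le 1/2$ and $S_b\ge 1/2$, i.e.\ $1/2\in[S_{a-1},S_b]$.

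Next I would use monotonicity of $S$ to obtain a clean dichotomy: exactly one of the following holds. (i) $S$ jumps strictly past $1/2$, i.e.\ there is an index $i^*$ with $S_{i^*-1}<1/2<S_{i^*}$, equivalently $\sum_{j<i^*}w_{(j)}<1/2$ and $\sum_{j>i^*}w_{(j)}<1/2$; this automatically forces $w_{(i^*)}=S_{i^*}-S_{i^*-1}>0$. (ii) $S$ attains the value $1/2$, i.e.\ $E=\{k:S_k=1/2\}\ne\varnothing$; since $S$ is nondecreasing, $E$ is a contiguous block $\{p,\dots,q\}$, so $S_{p-1}<1/2=S_p=\cdots=S_q<S_{q+1}$, which immediately gives $w_{(p+1)}=\cdots=w_{(q)}=0$ and identifies the indices $\underline i=p$ and $\overline i=q+1$ together with the partial-sum identities stated in the lemma. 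The boundary constraints $1<i^*<n$ (and the analogues for $\underline i,\overline i$) hold except in the degenerate case where a single weight exceeds $1/2$, which I would dispose of as a trivial special case.

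Then I would derive the conclusion by combining the dichotomy with the characterization from the first step. In case (i) every partial sum is either $<1/2$ or $>1/2$, so for a candidate value $v$ the inequalities $S_{a-1}\le 1/2\le S_b$ become $a\le i^*\le b$, which forces $v=x_{(i^*)}$; conversely, between its first and last occurrences $x_{(i^*)}$ does satisfy $S_{a-1}<1/2<S_b$, so $x_{(i^*)}$ is the unique weighted median. In case (ii), for each index $k$ with $\underline i\le k\le\overline i$, writing $a,b$ for the first and last indices carrying the value $x_{(k)}$, one reads $S_{a-1}\le 1/2\le S_b$ directly off $S_{\underline i-1}<1/2=S_{\underline i}=\cdots=S_{\overline i-1}<S_{\overline i}$, so each of $x_{(\underline i)},\dots,x_{(\overline i)}$ is a weighted median; provided these positions cover at least two distinct values, the weighted median is then not unique. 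Putting the two cases together yields the claimed equivalence.

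The conceptual heart of the argument --- uniqueness holds iff the cumulative weight crosses $1/2$ strictly --- is already contained in the first step; what makes the write-up tedious enough to defer to an appendix is the bookkeeping with ties and zero weights. A weighted median is a value, not a sorted index, so one must carefully pass between ``$v$ is a weighted median'' and statements about the positions $a,\dots,b$ carrying $v$; and in case (ii) one must ensure the exhibited list of weighted medians genuinely contains two distinct values, which is where a nondegeneracy assumption on the $x_i$ (or the particular setting in which the lemma is invoked) enters. I expect this tie/zero-weight bookkeeping to be the only real obstacle; the rest is routine indexing.
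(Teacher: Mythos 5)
The paper never proves Lemma~\ref{lem:properties-weighted-median}: it is presented as a rephrasing of Appendix~A of the cited reference \cite{WM-FB-GC-JH-FD:22}, so there is no in-paper argument to compare against. Your cumulative-sum argument is the standard one and is sound: the reformulation ``a value $v$ with first/last sorted occurrences $a,b$ is a weighted median iff $S_{a-1}\le 1/2\le S_b$'', combined with the dichotomy that the nondecreasing sequence $S_0=0,\dots,S_n=1$ either jumps strictly across $1/2$ or attains $1/2$ on a contiguous block, yields exactly the two cases of the lemma. Two remarks on how your write-up meets the (loosely stated) lemma. First, with your identification $\underline i=p$, $\overline i=q+1$ the identity that actually holds is $\sum_{i=1}^{\overline i-1}w_{(i)}=1/2$ rather than $\sum_{i=1}^{\overline i}w_{(i)}=1/2$ as printed; the printed version is internally inconsistent with $\sum_{i=\overline i+1}^{n}w_{(i)}<1/2$, so you are implicitly (and correctly) proving the repaired statement, and it would be worth saying so explicitly. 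Second, you correctly isolate the two situations in which the literal statement fails: the boundary cases $i^*\in\{1,n\}$ (an extreme value carrying weight greater than $1/2$), where the median is unique but no $1<i^*<n$ exists, and the degenerate case in which $x_{(\underline i)},\dots,x_{(\overline i)}$ all carry the same value, where the median is again unique although case~(ii) obtains. These are defects of the statement rather than of your argument, and they are harmless for the way the lemma is used in Definition~\ref{def:WM-op-dyn} (which only needs the set of medians to be an order-contiguous block so that a closest one to $x_i(t)$ is well defined). Beyond this bookkeeping there is no gap.
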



The weighted-median opinion dynamics have been proposed in~\cite{WM-FB-GC-JH-FD:22} as a discrete-time stochastic process, in which one individual updates their opinion at each time step.
\begin{definition}[Weighted-median opinion dynamics~\cite{WM-FB-GC-JH-FD:22}]\label{def:WM-op-dyn}
Consider a group of $n$ individuals in an influence network associated with a row-stochastic influence matrix $W$. For any $i\in V$ and any $t\in \mathbb{N}$, denote by $x_i(t)$ individual $i$'s opinion at time $t$. The weighted-median opinion dynamics is defined as the following stochastic process: At each time $t+1$, one individual $i$ is uniformly randomly picked and updates their opinion according to the following equation:
\begin{equation*}
x_i(t+1) = \Med_i\big(x(t);W\big),
\end{equation*}
where $\Med_i(x(t);W)$ is the weighted median of $x(t)$ associated with the weights given by the $i$-th row of $W$, i.e., $(w_{i1},w_{i2},\dots, w_{in})$. If the weighted-median is not unique, then let $\Med_i\big(x(t);W\big)$ be the weighted median that is the closest to $x_i(t)$.
\end{definition}

\begin{remark}
The model setup in Definition~\ref{def:WM-op-dyn} guarantees that $\Med_i\big(x(t);W\big)$ is always unique for any $i$ and $t$, see~\cite{WM-FB-GC-JH-FD:22} for a detailed discussion. As shown in~\cite{WM-FB-GC-JH-FD:22}, the weighted-median opinion dynamics can be interpreted as a best-response dynamics, in which individuals update their opinions by taking the optimal solutions that myopically minimize their social pressure caused by disagreeing with others, i.e.,
\begin{align*}
x_i(t+1) \in \text{argmin}_{z\in \mathbb{R}} \sum_{j=1}^n w_{ij}|z-x_j(t)|.
\end{align*}
\end{remark}

\begin{remark}[Connections with other models]
According to Lemma~\ref{lem:properties-weighted-median}, the weighted median is a self-map. Hence, any individual $i$ in the weighted-median opinion dynamics can only adopts opinions $x_i(t)$ that initially exist in the system, i.e., $x(t)\in \{x_1(0),\dots, x_n(0)\}^n$ for any $t\in \mathbb{N}$. Therefore, for any fixed initial condition, the weighted-median model given by Definition~\ref{def:WM-op-dyn} is essentially a discrete-time Markov chain. In the case when there are only two distinct initial opinions, the weighted-median model coincides with several well-studied network dynamics, e.g., the deterministic voter model based on the local majority rule~\cite{RAH-TML:75,BCB:97} and the linear threshold model with homogeneous threshold 1/2~\cite{EY-DA-AO:11,AG-WL-LVSL:11}. If the underlying influence network is a lattice, then it becomes the Glauber model of social segregation~\cite{RM:11}. However, the two-opinion scenario does not reflect a key feature of the weighted-median opinion dynamics: the discrete but ordered opinions set.
\end{remark}

\section{Dynamical behavior of the weighted-median opinion dynamics}

It is widely believed that influence network structure plays an important role in shaping opinion evolution in social groups. However, in some widely-studied averaging-based models, e.g., the French-DeGroot model and the bounded-confidence models, whether a system reaches consensus is only determined by the connectivity of the (potentially time-varying) influence network. The effects of other finer network structures are not fully unveiled. The intuition behind this problem is that the weighted-averaging mechanism implies too large ``attractive forces'' between distant opinions. As a result, no structural property other than the lack of connectivity is capable of harnessing the system from converging to consensus.

On the other hand, the weighted-median mechanism resolves the above issue by assuming independence between opinion attractions and opinion distances. With this fundamental change, the weighted-median opinion dynamics exhibit some new features in terms of how influence network structure shapes opinion evolutions. Fig.~\ref{fig:eg-lattice}, taken from~\cite{WM-FB-GC-JH-FD:22}, shows a typical example of a single simulation of the weighted-median model on a lattice graph. One could observe that the system does not reach a consensus at the equilibrium. Moreover, opinions might form local clusters, which indicates that some local structures of the influence network might play roles in shaping the system's asymptotic behavior.

In this section, we identify two such structures and analyze how they affect the weighted-median model's dynamical behavior. These two structures are cohesive sets and decisive links. We provide some important properties of them, based on which we characterize the set of equilibria, establish its convergence, and propose conditions for reaching asymptotic consensus and dissensus respectively.

\begin{figure}
\centering
\includegraphics[width=1\linewidth]{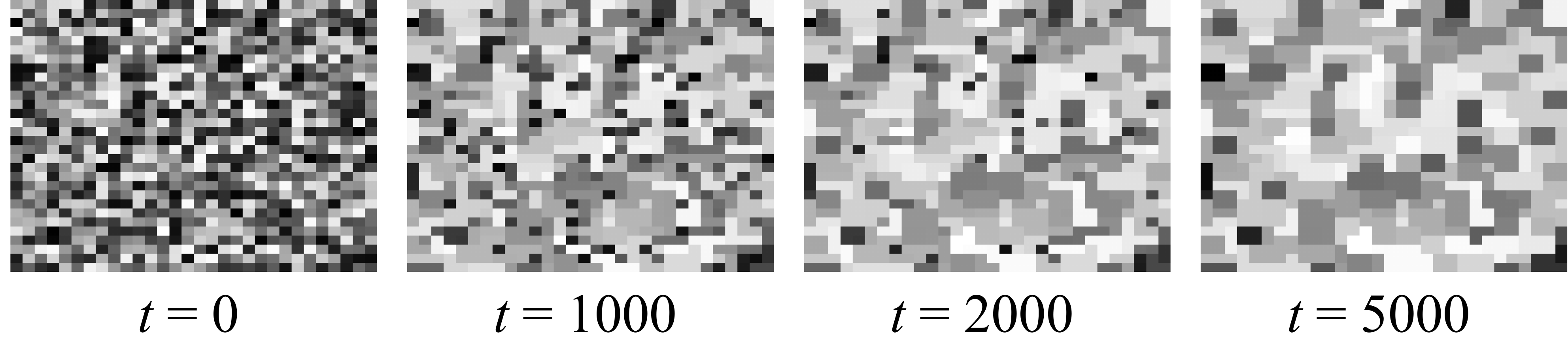}
\caption{One simulation of the weighted-median model on a 30$\times$30 lattice graph. This figure is taken from~\cite{WM-FB-GC-JH-FD:22}, where it is Fig.~3. In this figure, each block is an individual bilaterally connected with all their adjacent blocks (not including the diagonally adjacent blocks). Each individual has a self-loop and uniformly assigns weights to all their neighbors including themselves. Initial opinions are independently randomly generated according to the uniform distribution on $[-1,1]$. The gray scale of each block is proportional to the absolute value of the individual's final opinion. After 5000 time steps, the evolution reaches an equilibrium.}\label{fig:eg-lattice}
\end{figure}

\subsection{Important concepts and lemmas}

The concept of cohesive sets was first proposed in~\cite{SM:00} and used in the study of linear-threshold models of network diffusion~\cite{DA-AO-EY:11}, with a generalized form. In this paper, we adopt a specific version of it. See Fig.~\ref{fig:concepts-CS-DL}(a) for a visualized example.
\begin{definition}[Cohesive set and maximal cohesive set]\label{def:CS-MCS}
Given an influence network $\mathcal{G}(W)$ with node set $V$, a cohesive set $\mathcal{M}\subset \mathcal{V}$ is a subset of nodes that satisfies $\sum_{j\in \mathcal{M}} w_{ij}\ge 1/2$ for any $i\in \mathcal{M}$. A cohesive set $\M$ is a maximal cohesive set if there does not exists $i\in \V\setminus \M$ such that $\sum_{j\in \M} w_{ij}>1/2$.
\end{definition}

Note that, in the weighted-median opinion dynamics, if all the nodes in a cohesive set adopt the same opinion, then that opinion will also be the weighted-median opinion for each individual in the cohesive set. Namely, these individuals' opinions will not be updated via the weighted-median mechanism. In this sense, the concept of cohesive set can be interpreted as a characterization of ``echo chamber'' (a metaphorical description of a situation in which beliefs are amplified by communication and repetition inside a closed system).

Given the concept of cohesive set, we further define a process called \emph{cohesive expansion}.
\begin{definition}[Cohesive expansion]\label{def:cohesive-expansion}
Given an influence network $\G(W)$ with node set $\V$ and a subset of nodes $\M\subset \V$, a subset of $\V$ is a cohesive expansion of $\M$ if it can be constructed via the following iteration algorithm:
\begin{enumerate}
\item Let $\M_0=\M$;
\item For $k=0,1,2,\dots$, if there exists $i\in \V\setminus \M_k$ such that $\sum_{j\in \M_k} w_{ij}>1/2$, then let $\M_{k+1}=\M_k \cup \{i\}$;
\item Terminate the iteration at step $k$ as long as there does not exists any $i\in \V\setminus \M_k$ such that $\sum_{j\in \M_k} w_{ij}>1/2$, and let $\textup{Expansion}(\M)=\M_k$.
\end{enumerate}
\end{definition}

In what follows, we present and prove some novel results on the fundamental properties of cohesive sets, cohesive expansion, and maximal cohesive sets. 

The following lemma states that cohesive expansion is unique. As a result, we can denote by $\textup{Expansion}(\M)$ the cohesive expansion of a node subset $M$ without ambiguity. The proof of Lemma~\ref{lem:uniqueness-cohesive-expansion} is given in Appendix~\ref{append:proof-uniqueness-cohesive-expansion}.
\begin{lemma}[Uniqueness of cohesive expansion]\label{lem:uniqueness-cohesive-expansion}
Given an influence network $\mathcal{G}(W)$ with the node set $\V$, for any $\M\subset \V$, the cohesive expansion of $\M$ is unique, i.e., independent of the order of node additions. 
\end{lemma}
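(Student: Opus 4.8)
The plan is to show that the cohesive expansion process is confluent: any two maximal sequences of admissible additions, starting from the same $\M$, terminate at the same set. The key structural fact is \emph{monotonicity of the admissibility condition}: if $\sum_{j\in A} w_{ij} > 1/2$ and $A \subseteq B$, then $\sum_{j\in B} w_{ij} > 1/2$ as well, since the weights are non-negative. Hence once a node $i$ becomes eligible to be added (i.e.\ $\sum_{j\in\M_k} w_{ij} > 1/2$ for the current set $\M_k$), it remains eligible under every later superset; eligibility is never destroyed by adding other nodes. This means the process only ever \emph{gains} options, never loses them.

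The main step is a diamond/exchange argument. Suppose from a set $S$ we may legally add either node $a$ (reaching $S\cup\{a\}$) or node $b$ (reaching $S\cup\{b\}$), with $a\neq b$. By monotonicity, $b$ is still eligible from $S\cup\{a\}$ and $a$ is still eligible from $S\cup\{b\}$, so both one-step branches can be completed to the common set $S\cup\{a,b\}$. This local confluence, together with the fact that every execution is finite (each step strictly increases the cardinality, and the cardinality is bounded by $n$), yields global confluence by a standard Newman-type lemma argument; alternatively, one can argue directly. Let $\textup{Expansion}_1(\M) = \M_0 \subsetneq \M_1 \subsetneq \dots \subsetneq \M_p$ and $\textup{Expansion}_2(\M) = \M'_0 \subsetneq \M'_1 \subsetneq \dots \subsetneq \M'_q$ be the sets produced by two runs. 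I claim $\M_p \subseteq \M'_q$: prove by induction on $k$ that $\M_k \subseteq \M'_q$. The base case $\M_0 = \M = \M'_0 \subseteq \M'_q$ is immediate. For the inductive step, suppose $\M_k \subseteq \M'_q$ and let $i$ be the node added to form $\M_{k+1} = \M_k\cup\{i\}$, so $\sum_{j\in\M_k} w_{ij} > 1/2$. By monotonicity and $\M_k\subseteq\M'_q$ we get $\sum_{j\in\M'_q} w_{ij} > 1/2$; but $\M'_q$ is a terminal set of the second run, so \emph{no} node outside $\M'_q$ satisfies this strict inequality, forcing $i\in\M'_q$, hence $\M_{k+1}\subseteq\M'_q$. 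Thus $\M_p\subseteq\M'_q$, and by the symmetric argument $\M'_q\subseteq\M_p$, so the two expansions coincide.

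I do not anticipate a serious obstacle here; the only point requiring care is making the monotonicity observation explicit and being precise that the termination condition in Definition~\ref{def:cohesive-expansion} is a statement about \emph{all} nodes outside the current set, which is exactly what powers the inductive step above. One should also note in passing that the process is well-defined and halts because $|\M_{k+1}| = |\M_k| + 1$ and $|\M_k| \le n$ throughout, so at most $n - |\M|$ additions occur.
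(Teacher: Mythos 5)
Your proof is correct and rests on exactly the same two facts as the paper's: eligibility $\sum_{j\in A} w_{ij} > 1/2$ is monotone in $A$ because the weights are non-negative, and a terminal set admits no eligible node outside it. The paper packages this as a contradiction — taking the first node of one run that is absent from the other run's result and descending to an impossible case — whereas your forward induction showing $\M_k \subseteq \M'_q$ is the same argument run directly (and arguably cleaner); the diamond/Newman's-lemma framing is dispensable given that direct argument.
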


The following lemma presents some fundamental properties of cohesive sets and cohesive expansions. The proof is given in Appendix~\ref{append:proof-lem:properties-cohesive-expansion}.
\begin{lemma}[Properties of cohesive sets/expansions]\label{lem:properties-cohesive-expansion}
Given an influence network $\mathcal{G}(W)$ with the node set $\V$, the following statements hold:
\begin{enumerate}
\item If $\M_1,\,\M_2\subseteq \V$ are both cohesive sets, then $\M_1\cup \M_2$ is also a cohesive set;
\item For any $\M,\, \tilde{\M}\subset \V$, if $\M\subseteq \tilde{\M}$, then $\textup{Expansion}(\M)\subseteq \textup{Expansion}(\tilde{\M})$;
\item For any $\M,\, \tilde{\M}\subset \V$, $\textup{Expansion}(\M)\cup \textup{Expansion}(\tilde{\M})$ $\subseteq$ $\textup{Expansion}(\M\cup \tilde{\M})$; and
\item If $\M$ is a cohesive set, then $\textup{Expansion}(\M)$ is also cohesive and is the smallest maximal cohesive set that includes $\M$, that is, for any maximal cohesive set $\hat{\M}$ such that $\M\subset \hat{\M}$, we have $\textup{Expansion}(\M)\subset \hat{\M}$. 
\end{enumerate}
\end{lemma}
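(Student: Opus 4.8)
The plan is to prove the four items in the order stated, since item~(3) is an immediate corollary of item~(2), and the containment assertion in item~(4) is obtained by rerunning the induction used for item~(2). The only substantive ingredient throughout is the elementary fact that a sum of non-negative weights can only increase when the index set grows, used together with the termination rule in Definition~\ref{def:cohesive-expansion} and with Lemma~\ref{lem:uniqueness-cohesive-expansion}, which legitimizes speaking of \emph{the} expansion and of picking one convenient sequence of node additions that realizes it.

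For item~(1) I would simply take any $i \in \M_1 \cup \M_2$, assume without loss of generality $i \in \M_1$, note $\sum_{j\in\M_1} w_{ij} \ge 1/2$ by cohesiveness of $\M_1$, and conclude $\sum_{j\in\M_1\cup\M_2} w_{ij} \ge \sum_{j\in\M_1} w_{ij} \ge 1/2$ from $w_{ij}\ge 0$. For item~(2) I would fix a sequence of additions $i_1,i_2,\dots$ producing $\textup{Expansion}(\M)$ from $\M_0=\M$, set $\M_k=\M\cup\{i_1,\dots,i_k\}$, and prove by induction on $k$ that $\M_k\subseteq\textup{Expansion}(\tilde\M)$. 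The base case uses $\M\subseteq\tilde\M\subseteq\textup{Expansion}(\tilde\M)$ (expansion never deletes nodes); in the inductive step, the addition rule gives $i_k\notin\M_{k-1}$ with $\sum_{j\in\M_{k-1}} w_{i_kj}>1/2$, so by the inductive hypothesis $\M_{k-1}\subseteq\textup{Expansion}(\tilde\M)$ and non-negativity of weights we get $\sum_{j\in\textup{Expansion}(\tilde\M)} w_{i_kj}>1/2$, which would contradict the termination condition for the expansion of $\tilde\M$ unless $i_k\in\textup{Expansion}(\tilde\M)$. Passing to the terminal index yields $\textup{Expansion}(\M)\subseteq\textup{Expansion}(\tilde\M)$.

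Item~(3) then follows by applying item~(2) to $\M\subseteq\M\cup\tilde\M$ and to $\tilde\M\subseteq\M\cup\tilde\M$ and taking the union. For item~(4) I would first check that $\textup{Expansion}(\M)$ is cohesive: for $i\in\M$ use cohesiveness of $\M$ and monotonicity as in item~(1); for $i\in\textup{Expansion}(\M)\setminus\M$, the node $i$ was added at some step with $\sum_{j\in\M_{k-1}} w_{ij}>1/2$ and $\M_{k-1}\subseteq\textup{Expansion}(\M)$, hence $\sum_{j\in\textup{Expansion}(\M)} w_{ij}>1/2$. Maximality is exactly the termination condition: no $i\in\V\setminus\textup{Expansion}(\M)$ has more than half its outgoing weight on $\textup{Expansion}(\M)$. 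For the ``smallest'' part, given any maximal cohesive $\hat\M\supseteq\M$ I would rerun the induction of item~(2) with $\textup{Expansion}(\tilde\M)$ replaced by $\hat\M$ and the termination condition replaced by the maximality of $\hat\M$, concluding $\textup{Expansion}(\M)\subseteq\hat\M$ (I would state this final inclusion as $\subseteq$, since equality is possible).

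I do not anticipate a genuine obstacle: every estimate is a one-line monotonicity bound. The only place needing care is the bookkeeping of the induction in items~(2) and~(4) — ensuring the inductive hypothesis $\M_{k-1}\subseteq\textup{Expansion}(\tilde\M)$ (respectively $\M_{k-1}\subseteq\hat\M$) is in hand before invoking the strict inequality $>1/2$ and the termination/maximality condition — and the appeal to Lemma~\ref{lem:uniqueness-cohesive-expansion} to justify treating ``the expansion sequence'' as a well-defined object along which to induct.
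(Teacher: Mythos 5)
Your proposal is correct and follows essentially the same route as the paper: a one-line monotonicity bound for item~(1), induction along the sequence of node additions (justified by Lemma~\ref{lem:uniqueness-cohesive-expansion}) for item~(2), a double application of~(2) for item~(3), and the chain $\textup{Expansion}(\M)\subseteq\textup{Expansion}(\hat{\M})=\hat{\M}$ for item~(4). If anything you are slightly more thorough: you explicitly verify that $\textup{Expansion}(\M)$ is itself cohesive (which is needed for it to qualify as a \emph{maximal cohesive} set, and which the paper's proof of item~(4) leaves implicit), and you correctly note that the final inclusion should be $\subseteq$ rather than proper containment.
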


Below we present another useful lemma on cohesive sets, which is a straightforward consequence of Definition~\ref{def:CS-MCS} and Lemma~\ref{lem:properties-cohesive-expansion}~4).
\begin{lemma}[Cohesive partition]\label{lem:cohesive-partition}
Given an influence network $\mathcal{G}(W)$ with node set $\V$ and a cohesive set $\M\subset \V$, if $\M$ is maximally cohesive, then $\V\setminus \M$ is also maximally cohesive; If $\M$ is not maximally cohesive, then either of the following two statements holds:
\begin{enumerate}
\itemsep0em 
\item $\textup{Expansion}(\M)=\V$;
\item $\textup{Expansion}(\M)$ and $\V\setminus \textup{Expansion}(\M)$ are both non-empty and maximally cohesive. 
\end{enumerate}
\end{lemma}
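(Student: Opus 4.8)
The plan is to reduce the whole statement to two observations: a \emph{complementation fact} — the complement of a maximal cohesive set is again maximal cohesive — proved by one line of arithmetic using row-stochasticity of $W$; and a \emph{completion fact}, obtained by feeding $\textup{Expansion}(\M)$ into the complementation fact via Lemma~\ref{lem:properties-cohesive-expansion}~4). Before starting I would fix the non-degeneracy convention $\emptyset\neq\M\neq\V$: the case $\M=\V$ is excluded because cohesive sets are proper subsets by Definition~\ref{def:CS-MCS}, and $\M=\emptyset$ is a genuine but uninteresting exception ($\emptyset$ is vacuously maximally cohesive while its complement $\V$ is not a proper subset).

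First I would prove the complementation fact. Assume $\M$ is maximally cohesive. Since $W$ is row-stochastic, $\sum_{j\in\V\setminus\M}w_{ij}=1-\sum_{j\in\M}w_{ij}$ for every $i\in\V$. For $i\in\V\setminus\M$, maximality of $\M$ gives $\sum_{j\in\M}w_{ij}\le 1/2$, hence $\sum_{j\in\V\setminus\M}w_{ij}\ge 1/2$; so $\V\setminus\M$ is cohesive (and it is a nonempty proper subset of $\V$ because $\emptyset\neq\M\neq\V$). For $i\in\M$, cohesiveness of $\M$ gives $\sum_{j\in\M}w_{ij}\ge 1/2$, hence $\sum_{j\in\V\setminus\M}w_{ij}\le 1/2$; since the only candidates for enlarging $\V\setminus\M$ lie in $\M$, no such enlargement is possible, i.e. $\V\setminus\M$ is maximally cohesive. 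This already gives the first assertion of the lemma.

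For the second assertion, assume $\M$ is cohesive but \emph{not} maximally cohesive; this forces $\M\neq\emptyset$, since $\emptyset$ is trivially maximally cohesive. By Lemma~\ref{lem:properties-cohesive-expansion}~4), $\textup{Expansion}(\M)$ is cohesive and is the smallest maximal cohesive set containing $\M$. If $\textup{Expansion}(\M)=\V$ we are in case 1). Otherwise $\textup{Expansion}(\M)$ is a proper subset of $\V$, is nonempty (it contains $\M\neq\emptyset$), and is maximally cohesive; applying the complementation fact with $\textup{Expansion}(\M)$ in place of $\M$ shows that $\V\setminus\textup{Expansion}(\M)$ is maximally cohesive, and it is nonempty exactly because $\textup{Expansion}(\M)\neq\V$. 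That is case 2).

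The hard part, to the extent there is one, is not the argument but the bookkeeping: essentially all of the substance is hidden in Lemma~\ref{lem:properties-cohesive-expansion}~4) (that $\textup{Expansion}(\M)$ is cohesive and is the smallest maximal cohesive set containing $\M$), which is assumed here. What remains requires care only with the degenerate cases and with the slight $\subset$-versus-$\subseteq$ ambiguity in Definitions~\ref{def:CS-MCS} and~\ref{def:cohesive-expansion} — in particular being explicit about whether ``cohesive set'' is allowed to equal $\V$ (as in case 1) or is a genuine proper subset (as everywhere else).
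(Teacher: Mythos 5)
Your proof is correct and uses exactly the ingredients the paper relies on: the paper gives no explicit proof, asserting the lemma is a straightforward consequence of Definition~\ref{def:CS-MCS} (row-stochasticity yielding the complementation fact) and Lemma~\ref{lem:properties-cohesive-expansion}~4) (the expansion being the smallest maximal cohesive superset). Your explicit handling of the degenerate cases $\M=\emptyset$ and the $\subset$-versus-$\subseteq$ convention is a welcome addition but does not change the argument.
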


Another important concept involved in called \emph{decisive link}. See Fig.~\ref{fig:concepts-CS-DL}(b)(c) for visualized illustrations. Loosely speaking, a link $(i,j)$ is decisive if node $j$ could potentially be a ``tie-breaker'' when $i$ computes their weighted-median opinion.
\begin{definition}[Decisive and indecisive out-links]\label{def:decisive-indecisive-links}
Given an influence network $\mathcal{G}(W)$ with the node set $\V$, define the out-neighbor set of each node $i$ as $\N_i = \{j\in \V\,|\, w_{ij}\neq 0\}$. A link $(i,j)$ is a decisive out-link of node $i$, if there exists a subset $\theta\subset \N_i$ such that the following three conditions hold: (1) $j\in \theta$; (2) $\sum_{k\in \theta} w_{ik} > 1/2$; (3) $\sum_{k\in \theta\setminus \{j\}} w_{ik}<1/2$. Otherwise, the link $(i,j)$ is an indecisive out-link of node $i$.
\end{definition}

\begin{remark}\label{remark:implication-decisive/indecisive}
Definition~\ref{def:decisive-indecisive-links} together with Definition~\ref{def:weighted-median-in-general} indicates that individuals in the weighted-median opinion dynamics adopt only opinions of their out-neighbors linked to via decisive links. In addition, if $(i,j)$ is an indecisive link, then individual $i$'s weighted-median opinion $\Med_i(x;W)$ never depends on individual $j$'s opinion $x_j$.
\end{remark}

\begin{figure}
\centering
\includegraphics[width=0.95\linewidth]{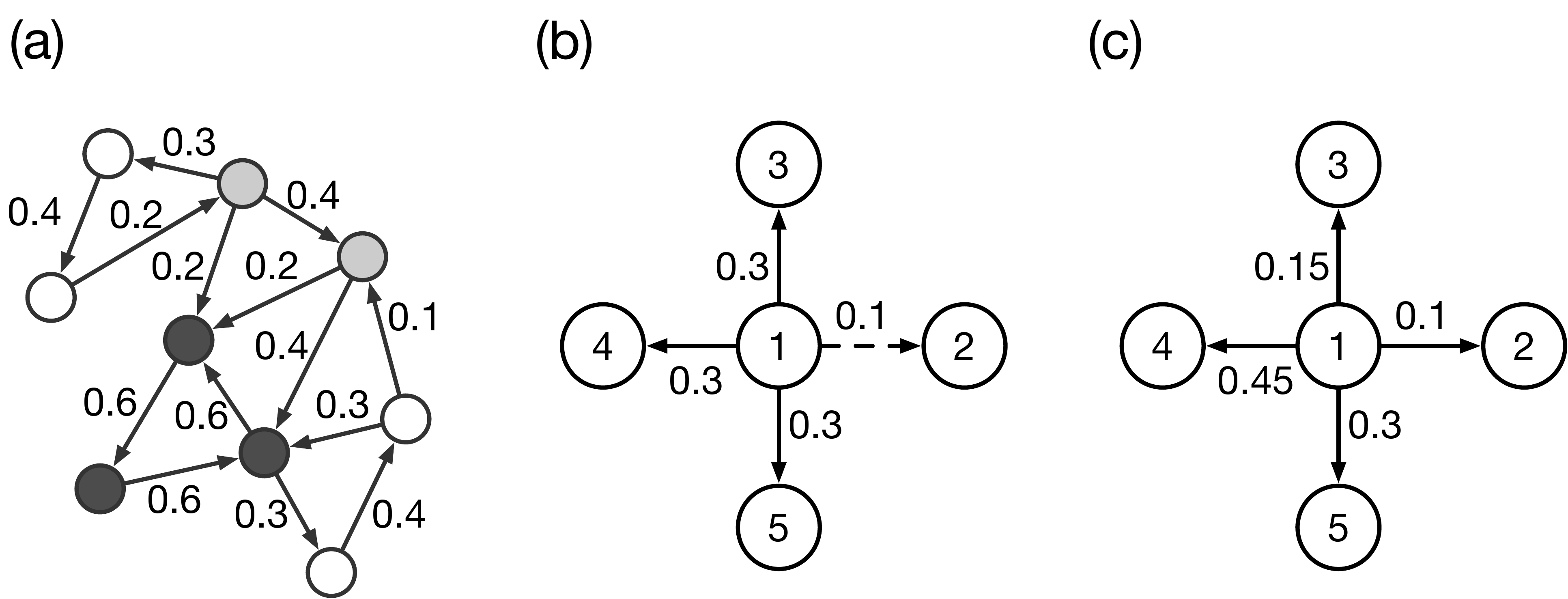}
\caption{Visualized examples illustrating the concepts of cohesive sets, maximal cohesive sets, decisive links and indecisive links. In Panel~(a), the dark grey nodes constitute a cohesive set. The dark grey and the light grey nodes together form a maximal cohesive set. In Panel~(b), the link $(1,2)$ is indecisive, while the other links are all decisive. In Panel~(c), all the links are decisive.}\label{fig:concepts-CS-DL}
\end{figure}

\subsection{Equilibria, convergence, consensus and persistent dissensus}
In this subsection, we present and prove our main results on the dynamical behavior of the weighted-median opinion dynamics. First of all, the theorem below characterizes the set of all the equilibria.
\begin{theorem}[Set of equilibria]\label{thm:equilibrium-set}
Given an influence network $\G(W)$ with $n$ individuals, $x^*\in \mathbb{R}^n$ is an equilibrium of the weighted-median opinion dynamics defined by Definition~\ref{def:WM-op-dyn} if and only if $x^*$ is either a consensus vector, i.e., $x_i^*=x_j^*$ for any $i$, $j\in \V$, or satisfies the following condition: for any $y$ satisfying $\min_i x_i^* < y \le \max_i x_i^*$, both $\{i\in \V\,|\, x_i^*<y\}$ and $\{i\in \V\,|\, x_i^*\ge y\}$ are maximal cohesive sets on $\G(W)$.
\end{theorem}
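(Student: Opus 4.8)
The plan is to first reduce the notion of equilibrium to a pointwise statement at each node. A state $x^*$ is an equilibrium exactly when, for every $i\in\V$, the update of individual $i$ returns $x_i^*$, i.e.\ $\Med_i(x^*;W)=x_i^*$. By Definition~\ref{def:WM-op-dyn}, $\Med_i(x^*;W)$ is always a weighted median of $x^*$ with the weights given by row $i$ of $W$, so $x_i^*$ must itself be such a weighted median; conversely, if $x_i^*$ is a weighted median then the weighted median closest to $x_i^*$ is $x_i^*$ itself, and the update fixes it. Hence $x^*$ is an equilibrium if and only if, for every $i$, $x_i^*$ is a weighted median of $x^*$ associated with row $i$, which by Definition~\ref{def:weighted-median-in-general} is equivalent to
\[
\sum_{j\in\V:\,x_j^*\le x_i^*} w_{ij}\ge 1/2 \qquad\text{and}\qquad \sum_{j\in\V:\,x_j^*\ge x_i^*} w_{ij}\ge 1/2 \qquad\text{for all } i\in\V .
\]
Since a consensus vector trivially satisfies this, I would then assume $x^*$ is not a consensus vector and prove the equivalence with the level-set condition.

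For the ``only if'' direction, I would fix $y$ with $\min_i x_i^*<y\le\max_i x_i^*$ and set $A=\{i\in\V\,|\,x_i^*<y\}$ and $B=\V\setminus A=\{i\in\V\,|\,x_i^*\ge y\}$, both nonempty proper subsets of $\V$. If $i\in A$ then $\{j\in\V\,|\,x_j^*\le x_i^*\}\subseteq A$, so the first displayed inequality gives $\sum_{j\in A}w_{ij}\ge 1/2$; symmetrically $\sum_{j\in B}w_{ij}\ge 1/2$ for $i\in B$, so $A$ and $B$ are cohesive. For maximality of $A$: if $i\in B$ then $A\subseteq\{j\in\V\,|\,x_j^*<x_i^*\}$, and the weighted-median inequality $\sum_{j:\,x_j^*<x_i^*}w_{ij}\le 1/2$ forces $\sum_{j\in A}w_{ij}\le 1/2$, so no node outside $A$ tips its incident mass above $1/2$; the symmetric argument gives maximality of $B$.

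For the ``if'' direction, I would fix $i\in\V$, write $v=x_i^*$, and verify the two inequalities above. The inequality $\sum_{j:\,x_j^*>v}w_{ij}\le 1/2$ is vacuous when $v=\max_k x_k^*$; otherwise, taking $y$ to be the smallest opinion value in $x^*$ exceeding $v$ makes $\{j\in\V\,|\,x_j^*<y\}=\{j\in\V\,|\,x_j^*\le v\}$ a (maximal) cohesive set containing $i$, hence $\sum_{j:\,x_j^*\le v}w_{ij}\ge 1/2$. Likewise $\sum_{j:\,x_j^*<v}w_{ij}\le 1/2$ is vacuous when $v=\min_k x_k^*$; otherwise, taking $y=v$ makes $\{j\in\V\,|\,x_j^*\ge v\}$ a cohesive set containing $i$, hence $\sum_{j:\,x_j^*\ge v}w_{ij}\ge 1/2$. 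Both inequalities then hold for every $i$, so $x^*$ is an equilibrium.

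The whole argument is just a translation between the pair of weighted-median inequalities at a node and cohesiveness of the two complementary level sets of $x^*$, so I do not anticipate a genuine obstacle; the only points requiring care are the boundary cases (nodes holding the extreme opinions, and choosing the cut value $y$ as the smallest opinion value strictly above $v$) and the initial reduction, which relies on the closest-median tie-break built into Definition~\ref{def:WM-op-dyn}. I may also note that maximality of a level set $A$ is in fact equivalent to cohesiveness of the complement $\V\setminus A$, so the ``maximal'' qualifier is not needed for the ``if'' direction and the condition could equivalently be stated with ``cohesive'' in place of ``maximal cohesive''.
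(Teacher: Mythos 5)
Your proposal is correct and follows essentially the same route as the paper: both arguments translate the per-node weighted-median condition $\sum_{j:\,x_j^*<x_i^*}w_{ij}\le 1/2$ and $\sum_{j:\,x_j^*>x_i^*}w_{ij}\le 1/2$ into cohesiveness of the two complementary level sets, with your ``only if'' direction being the direct form of the paper's contrapositive. Your closing observation that maximality of a level set is equivalent to cohesiveness of its complement is also implicit in the paper's argument, so nothing is missing.
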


\begin{proof}
We first prove the ``if'' part. If $x^*$ is a consensus vector, then we have $\Med_i(x^*;W)=x_i^*$ for any $i$, since the weighted-median of only one opinion is always the opinion itself. Now suppose that $x^*$ is NOT a consensus vector and satisfies that, for any $y\in (\min_k x_k^*, \max_k x_k^*)$, $\{j\in \V|x_j^* <y\}$ and $\{j\in \V|x_j^*\ge y\}$ are both maximal cohesive sets. For any given $i$, let $y=x_i^*$. Since $i\in \{j:\, x_j^*\ge x_i^*\}$, and $\{j:\, x_j^*\ge x_i^*\}$ is a maximal cohesive set, we have
\begin{equation*}
\sum_{j:\, x_j^*\ge x_i^*} w_{ij}\ge \frac{1}{2} \quad \Rightarrow \quad \Med_i(x^*;W)\ge x_i^*.
\end{equation*}
Let $\tilde{y}=\min_k \{x_k^*\,|\, x_k^*>x_i^*\}$. Since $i\in \{j:\, x_i^*< \tilde{y}\}$, and $\{j:\, x_i^*< \tilde{y}\}$ is a maximal cohesive set, we have
\begin{equation*}
\sum_{j:\, x_j^*< \tilde{y}} w_{ij}\ge \frac{1}{2} \quad \Rightarrow \quad \Med_i(x^*;W)<\tilde{y}.
\end{equation*}
The inequality $x_i^*\le \Med_i(x^*;W) <\tilde{y}$ together with $\Med_i(x^*;W)\ge x_i^*$ leads to $x_i^*= \Med_i(x^*;W)$. This concludes the proof for the ``if'' part.

Now we proceed to prove the ``only if'' part. Suppose $x^*$ is not a consensus vector and there exists $y\in (\min_k x_k^*, \max_k x_k^*)$ such that $\{j:\,x_j^* <y\}$ and $\{j:\,x_j^*\ge y\}$ are not both maximal cohesive sets. Since these two sets form a disjoint partition of the node set $\{1,\dots, n\}$, one of them must not be cohesive. Otherwise, by definition, we must have
\begin{align*}
\sum_{j:\, x_j^*<y} w_{pj} & =1-\!\sum_{j:\, x_j^*\ge y} w_{pj}\le \frac{1}{2},\text{ for any }p\in \{j:\, x_j^*\ge y\},\\
\sum_{j:\, x_j^*\ge y} w_{pj} & = 1-\! \sum_{j:\, x_j^*<y} w_{pj}\le \frac{1}{2},\text{ for any }p\in \{j:\, x_j^*<y\},
\end{align*}
which contradicts the pre-assumption that $\{j:\, x_j^*<y\}$ and $\{j:\, x_j^*\ge y\}$ are not both maximal cohesive sets.

Suppose $\{j:\, x_j^*\ge y\}$ is not cohesive. As a direct consequence, there exists $i$ with $x_i^*\ge y$ but 
\begin{equation*}
\sum_{j:\, x_j^*<y} w_{ij} > \frac{1}{2},
\end{equation*}
which in turn implies that $\Med_i(x^*;W)<y\le x_i^*$. Therefore, such $x^*$ is not an equilibrium of the weighted-median opinion dynamics. Similarly, if $\{j:\, x_j^*<y\}$ is not cohesive, then there exists $i$ with $x_i^*<y$ such that $\Med_i(x^*;W)\ge y>x_i^*$. That is, $x^*$ is not an equilibrium. Therefore, in order for $x^*$ to be an equilibrium, it must be either a consensus vector or $\{j:\,x_j^* <y\}$ and $\{j:\,x_j^*\ge y\}$ are both maximal cohesive sets for any $y\in (\min_k x_k^*, \max_k x_k^*)$. This concludes the proof for the ``only if'' part.
\end{proof}

If an equilibrium $x^*$ is not a consensus vector, we refer to it as a \emph{dissensus equilibrium}.

In what follows, we prove that the weighted-median opinion dynamics, with any initial condition, almost surely converge to an equilibrium in finite time. (Here ``almost surely'' is in terms of the random initial condition and the random individual update sequence given by Definition~\ref{def:WM-op-dyn}.) Moreover, we propose and prove necessary and sufficient graph-theoretic conditions for almost-sure convergence to consensus, as well as a sufficient graph-theoretic condition for almost-sure convergence to dissensus equilibria. 

The following lemma is a foundation of the proof for convergence and is akin to results about absorbing states in Markov chains. Its proof is provided in Appendix~\ref{append:proof-lem:monkey-typewriter}.
\begin{lemma}[Transforming randomness to sequence design]\label{lem:monkey-typewriter}
Consider the weighted-median opinion dynamics given by Definition~\ref{def:WM-op-dyn}. If, starting from any $x$, there exists an update sequence $i_1,\dots,i_{T_x}$, along which the opinion trajectory reaches an equilibrium at time step $T_x$, then the weighted-median opinion dynamics almost surely converges to an equilibrium in finite time, for any initial condition $x(0)$.
\end{lemma}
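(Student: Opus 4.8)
The plan is to establish this via a standard ``positive probability of reaching an absorbing state in bounded time'' argument for finite Markov chains, combined with the Borel--Cantelli lemma. First I would observe that, for any fixed initial condition $x(0)$, the process $x(t)$ takes values in the finite set $\{x_1(0),\dots,x_n(0)\}^n$ (by the self-map property of the weighted median, noted in the remarks), so the dynamics is a finite-state Markov chain whose equilibria are exactly its absorbing states. The hypothesis gives, for each state $x$ in this finite set, a finite update sequence $i_1,\dots,i_{T_x}$ of length $T_x$ that drives $x$ to an equilibrium.

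Next I would lower-bound the probability of this event happening in one ``block.'' Since at each time step the updating individual is picked uniformly at random from $n$ individuals, the particular sequence $i_1,\dots,i_{T_x}$ occurs with probability exactly $(1/n)^{T_x}>0$; along that realization the trajectory reaches an equilibrium, and once at an equilibrium it stays there forever regardless of subsequent picks. Let $T = \max_{x} T_x$, the maximum taken over the finite set of reachable states, and let $p = (1/n)^{T} > 0$. Then, starting from \emph{any} state $x$ that is not yet an equilibrium, within $T$ further steps the process reaches an equilibrium with probability at least $p$ (pad the sequence $i_1,\dots,i_{T_x}$ arbitrarily to length $T$; once absorbed it remains absorbed). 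Crucially, this bound is uniform over the state.

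Now I would iterate over blocks of length $T$. Conditioned on not having reached an equilibrium by the end of block $m$, the probability of still not having reached one by the end of block $m+1$ is at most $1-p$. Hence the probability that the process has not been absorbed after $m$ blocks is at most $(1-p)^m \to 0$ as $m\to\infty$. Therefore the process reaches an equilibrium in finite time with probability one, for this fixed $x(0)$. Finally, since the initial condition $x(0)$ itself is random and ranges over a set for which this almost-sure statement holds pointwise, integrating over the distribution of $x(0)$ (or simply noting the conclusion holds conditionally on every value of $x(0)$) yields almost-sure finite-time convergence unconditionally.

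The only genuine subtlety — the ``main obstacle'' such as it is — is making the uniform bound honest: $T_x$ depends on $x$, so I must invoke finiteness of the reachable state space to take $T=\max_x T_x$, and I must check that padding a short successful sequence out to length $T$ does no harm, which holds precisely because equilibria are absorbing. One should also be slightly careful that the ``equilibrium reached from $x$'' in the hypothesis need not be reachable by the \emph{chain} from $x(0)$ in an obvious way; but this is irrelevant, since we only ever apply the hypothesis to states $x$ that the chain actually visits, and for those states the guaranteed sequence exists by assumption. Everything else is the routine Borel--Cantelli / geometric-tail computation sketched above.
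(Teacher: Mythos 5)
Your proof is correct and follows essentially the same route as the paper: both reduce the dynamics (for fixed $x(0)$) to a finite-state Markov chain on $\{x_1(0),\dots,x_n(0)\}^n$ whose equilibria are absorbing states, and observe that the hypothesis makes it an absorbing chain. The only difference is that the paper cites the standard absorption theorem for finite absorbing Markov chains (Theorem~11.3 of Grinstead--Snell), whereas you inline its proof via the uniform bound $p=(1/n)^{T}$ and the geometric tail $(1-p)^m$ — which is exactly the textbook argument, so no substantive divergence.
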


With Lemma~\ref{lem:monkey-typewriter}, we can establish the convergence of the weighted-median opinion dynamics by manually inputting to the system the ordering of individuals' opinion updates. That is, Lemma~\ref{lem:monkey-typewriter} provides a way of transforming randomness into control inputs. Similar approaches have been used in previous literature on synchronization and consensus analysis, e.g., see~\cite{AS-SET-VDB-RS:08,GC:17b,GC-WS-WM-FB:18n}.

Now we are ready to present the main theorem of this subsection. The core of its proof is to establish the convergence of the weighted-median model. We prove the convergence by first considering a two-opinion scenario and then extending the argument to general cases where there are more than two distinct opinions initially.
\begin{theorem}[Convergence, consensus, and dissensus]\label{thm:dynamics-WM}
Consider the weighted-median opinion dynamics given by Definition~\ref{def:WM-op-dyn}, on an influence network $\G(W)$ with node set $\V$. Denote by $\Gdecisive(W)$ the subgraph of $\G(W)$ with all the indecisive out-links removed. The following statements hold:\begin{enumerate}
\item For any initial condition $x(0)\in \mathbb{R}^n$, the solution $x(t)$ almost surely converges to an equilibrium $x^*$ in finite time;
\item If the only maximal cohesive set of $\G(W)$ is $\V$, then, for any initial condition $x(0)\in \mathbb{R}^n$,  the solution $x(t)$ almost surely converges to a consensus state;
\item If the graph $\G(W)$ has a maximal cohesive set $\M\neq \V$, then there exists a subset of initial conditions $X_0\subseteq \mathbb{R}^n$ with non-zero Lebesgue measure in $\mathbb{R}^n$ such that, for any $x(0)\in X_0$, there is no update sequence along which the solution converges to consensus; and
\item If $\Gdecisive(W)$ does not have a globally reachable node, then, for any initial condition $x(0)\in \mathbb{R}^n\setminus \hat{X}$, where $\hat{X}=\{x\in \mathbb{R}^n\,|\, \exists\, i\neq j \text{ s.t. }x_i=x_j\}$ has zero Lebesgue measure in $\mathbb{R}^n$, the solution $x(t)$ almost surely reaches a dissensus equilibrium in finite time.
\end{enumerate} 
\end{theorem}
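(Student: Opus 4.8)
\textit{Proof proposal for statement~4).} The plan is to show that the hypothesis confines the opinions of two disjoint parts of the network to disjoint ranges of values, so that consensus becomes impossible, and then to combine this with the almost-sure finite-time convergence already established in statement~1). The first step is to translate the hypothesis into a structural fact about $\Gdecisive(W)$. Since the edges of $\Gdecisive(W)$ are exactly the decisive out-links, I would pass to its condensation (the directed acyclic graph of strongly connected components). A finite DAG always has at least one sink, and it has a node reachable from all other nodes precisely when that sink is unique; hence "$\Gdecisive(W)$ has no globally reachable node" implies that $\Gdecisive(W)$ has at least two \emph{closed} strongly connected components $C_1, C_2$, i.e., SCCs with no out-going decisive link. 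Only this direction is needed. By construction $C_1$ and $C_2$ are disjoint and nonempty, and every node $i\in C_\ell$ has all of its decisive out-neighbors inside $C_\ell$.

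The key step is a confinement invariant: for any closed SCC $C$ of $\Gdecisive(W)$, any initial condition, and any update sequence, $\{x_i(t):i\in C\}\subseteq\{x_i(0):i\in C\}$ for all $t\in\mathbb{N}$. I would prove this by induction on $t$, the base case being trivial. For the inductive step, if the node $p$ updating at time $t+1$ lies outside $C$, the opinions on $C$ are unchanged; if $p\in C$, then by Remark~\ref{remark:implication-decisive/indecisive} the updated value $x_p(t+1)=\Med_p(x(t);W)$ equals $x_j(t)$ for some decisive out-neighbor $j$ of $p$, and since $C$ is closed we have $j\in C$, so $x_p(t+1)\in\{x_k(t):k\in C\}$, which by the induction hypothesis is contained in $\{x_i(0):i\in C\}$.

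Finally, fix $x(0)\in\mathbb{R}^n\setminus\hat X$, so that all coordinates of $x(0)$ are pairwise distinct and in particular $\{x_i(0):i\in C_1\}$ and $\{x_i(0):i\in C_2\}$ are disjoint. By the invariant, $x_i(t)\neq x_j(t)$ for every $i\in C_1$, $j\in C_2$ and every $t$, so the trajectory is never a consensus vector, regardless of the update sequence. By statement~1) of this theorem the trajectory almost surely reaches an equilibrium $x^*$ in finite time; since $x^*$ cannot be a consensus vector, it is a dissensus equilibrium. That $\hat X=\bigcup_{i\neq j}\{x\in\mathbb{R}^n:x_i=x_j\}$ has zero Lebesgue measure is immediate, being a finite union of hyperplanes. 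The only delicate point is the confinement invariant, and within it the appeal to Remark~\ref{remark:implication-decisive/indecisive} guaranteeing that a weighted-median update only imports an out-neighbor's opinion along a decisive link; the remainder is bookkeeping plus the already-proven statement~1).
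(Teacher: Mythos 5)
Your proposal addresses only statement~4) of the four-part theorem, taking statement~1) as a black box; that is legitimate (the paper's proof of 4) also needs 1) to convert ``consensus is impossible'' into ``almost surely a dissensus equilibrium is reached in finite time''), but be aware that statements 1)--3) carry most of the weight of the paper's argument and are untouched here. For statement~4) itself your argument is correct and takes a genuinely different, somewhat more structural route than the paper's. The paper reasons about the putative consensus value: since no new opinions are created, consensus must be on some $x_i(0)$ held initially only by node $i$, and by Remark~\ref{remark:implication-decisive/indecisive} a node $j$ can acquire that value only if $j$ reaches $i$ along decisive links, so $i$ would have to be globally reachable in $\Gdecisive(W)$ --- contradiction. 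You instead pass to the condensation of $\Gdecisive(W)$, correctly observe that the absence of a globally reachable node forces at least two sinks, i.e.\ two disjoint closed strongly connected components, and prove by induction that each closed component's opinions stay confined to its own initial opinion set; since $x(0)\notin\hat{X}$ these two sets are disjoint forever, so no trajectory is ever a consensus vector. The underlying mechanism is identical (opinion values travel only along decisive links), but your version makes explicit the induction that the paper merely asserts in the phrase ``there must exist at least one path \dots via only decisive links,'' at the price of the extra (easy and correct) DAG/sink step. One shared caveat: both proofs lean on Remark~\ref{remark:implication-decisive/indecisive}, which tacitly assumes every updating node returns the opinion of some decisive out-neighbor; in degenerate weight configurations where a subset of a row of $W$ sums to exactly $1/2$ a node may have no decisive out-link yet still import a neighbor's opinion via the tie-breaking rule, so your confinement invariant inherits exactly the same implicit genericity assumption as the paper's argument --- no worse, no better.
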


\begin{proof}
Regarding statement~1), we first point out that it is equivalent to the following claim:
\begin{itemize}
\item Statement~1*): For any initial state $x(0)$, there exists an update sequence $\{i_1,\dots,i_T\}$ along which the solution $x(t)$ reaches an equilibrium at time step $T$.
\end{itemize}

``Statement~1) $\Rightarrow$ Statement~1*)'' is straightforward and ``1*) $\Rightarrow$ 1)'' is an immediate result of Lemma~\ref{lem:monkey-typewriter}. Therefore, in order to prove statement~1), we only need to prove~1*).

Now we prove that statement~1*) is true. We first consider the case in which there are only two different opinions initially in the network. Without loss of generality, let the two opinions be $y_1$ and $y_2$. Let
\begin{equation*}
\V_1(t) = \{i\in \V\,|\, x_i(t)=y_1\},\quad \V_2(t) = \{i\in \V\,|\, x_i(t)=y_2\}, 
\end{equation*}
for any $t\in \mathbb{N}$. Due to the weighted-median update rule given by Definition~\ref{def:WM-op-dyn}, for any initial state $x(0)\in \{y_1,y_2\}^n$, the solution $x(t)$ satisfies $x(t)\in \{y_1,y_2\}^n$ for any $t\ge 0$. Therefore, at any $t$, the sets $\V_1(t)$ and $\V_2(t)$ form a partition of the node set $\V$. We neglect the trivial cases when $\V_1(0)=\V$ or $\V_2(0)=\V$, otherwise the system would be already at an equilibrium. We construct an update sequence as follows:
\begin{enumerate}
\item For any time step $t+1$, $t=0,1,2,\dots$, if there exists some $i_{t+1} \in \V_1(t)$ such that $\sum_{j\in \V_2(t)}w_{i_{t+1}j}>1/2$, then update node $i_{t+1}$ at time step $t+1$ and thereby we get $\V_1(t+1)=\V_1(t)\setminus \{i_{t+1}\}$ and $\V_2(t+1)=\V_2(t)\cup \{i_{t+1}\}$;
\item The update stops at time step $T$ if there does not exists any $i\in \V_1(T)$ such that $\sum_{j\in \V_2(T)} w_{ij}>1/2$.
\end{enumerate}
By updating the system along the sequence $\{i_1,\dots,i_T\}$ we obtain a partition $\V_1(T)$ and $\V_2(T)$, and all the individuals in $\V_1(T)$ ($\V_2(T)$ resp.) hold the opinion $y_1$ ($y_2$ resp.). Note that $\V_2(T)$ is the cohesive expansion of $\V_2(0)$. However, since $\V_2(0)$ is not necessarily cohesive, $\V_2(T)$ is not necessarily cohesive either. 

If $\V_1(T)$ is empty, then the system is already at an equilibrium where all the nodes hold opinion $y_2$. If $\V_1(T)$ is not empty, then, for any $i\in \V_1(T)=\V\setminus \V_2(T)$, since $\V_2(T)$ is already the cohesive expansion of $\V_2(0)$, we have $\sum_{j\in \V_2(T)}w_{ij}\le 1/2$, which implies that
\begin{equation*}
\sum_{j\in \V_1(T)}\!w_{ij} = \sum_{j\in \V\setminus \V_2(T)}\!w_{ij} = 1 - \!\sum_{j\in \V_2(T)}\! w_{ij} \ge 1/2. 
\end{equation*}
Therefore, $\V_1(T)$ is cohesive. Denote by $\E_1 = \V_1(T) \cup (j_1,\dots,j_k)$ the cohesive expansion of $\V_1(T)$, and the nodes are added to $\V_1(T)$ along the sequence $j_1,\dots,j_k$. Now we construct the update sequence as $i_1,\dots,i_T,j_1,\dots,j_k$. If $\E_1 = \V$, then the system along this update sequence reaches the equilibrium where all the nodes adopt opinion $y_1$. If $\E_1\neq \V$, then the system along such an update sequence reaches the state in which all the nodes in $\E_1$ adopt opinion $y_1$ while all the nodes in $\V\setminus \E_1$ adopt opinion $y_2$. According to Lemma~\ref{lem:cohesive-partition}, $\E_1$ and $\V\setminus \E_1$ are both maximally cohesive sets. Therefore, the system reaches an equilibrium along the update sequence $i_1,\dots,i_T,j_1,\dots,j_k$. Till now we have proved statement~1*) in the case when there are only two distinct opinions initially.

Now we extend the above argument to the case of any arbitrary initial condition $x(0)\in \mathbb{R}^n$ by induction. Suppose statement~1*) holds whenever there are less than or equal to $R-1$ distinct opinions initially, for some $2<R\le n$. Suppose that there are $R$ distinct values among the initial opinions $\{x_1(0),\dots, x_n(0)\}$ and denote by these $R$ values $y_1,\,y_2\,\dots,\, y_R$, with $y_1<y_2<\dots<y_R$.

Let $A_1=\{y_1\}$ and $B_1=\{y_2,\dots, y_R\}$, and consider $A_1$, $B_1$ as two possible states of the nodes. Due to the weighted-median mechanism, whether a node converts its state from $A_1$ to $B_1$ depends only on which of its neighbors are in state $B_1$. Therefore, when discussing the nodes' state transitions from $A_1$ to $B_1$, we can assume that the nodes in state $B_1$ all have an identical opinion $\tilde{y}_1>y_1$. That is, the process of the state transitions from $A_1$ to $B_1$ is reduced to the two-opinion scenario, which we have previously discussed. Repeating the argument for the two-opinion case, we can construct an update sequence $i_{11},\dots,i_{1k_1}$ such that, at time $k_1$, either no individual hold the opinion $y_1$, or the node set $\V$ at time $k_1$ is divided into two sets $\E_1$ and $\V\setminus \E_1$, such that
\begin{enumerate}
\item all the nodes in $\E_1$ hold the opinion $y_1$;
\item $\E_1$ is a maximal cohesive set.
\end{enumerate}
If no individual holds the opinion $y_1$ at time $k_1$, then this situation is reduced to the case when there are $R-1$ distinct opinions initially, and the convergence is established by the pre-assumption. If the node set $\V$ at time $k_1$ is divided into two sets $\E_1$ and $\V\setminus \E_1$, then, after the update sequence $i_{11},\dots,i_{1k_1}$, nodes in $\E_1$ never switch their opinion from $y_1$ to any other opinion, while nodes in $\V\setminus \E_1$ never switch their opinions to $y_1$.

Let $A_2 = \{y_1,y_2\}$ and $B_2 = \{y_3,\dots,y_R\}$. We have shown that, after the update sequence $i_{1,1},\dots,i_{1,k_1}$, the nodes holding opinion $y_1$ will never change their opinions. Therefore, due to the weighted-median mechanism, for all the nodes in $\V\setminus \E_1$, it makes no difference to their opinion updates whether the nodes in $\E_1$ hold opinion $y_1$ or $y_2$. As the result, in terms of determining the behavior of the nodes in $\V\setminus \E_1$, we can take $y_1$ and $y_2$ as the same opinion. Following the same line of argument in the previous paragraph, there exists another update sequence $i_{21},\dots, i_{2k_2}$, right after the sequence $i_{1,1},\dots,i_{1,k_1}$, such that, after these two sequences of updates, all the nodes are partitioned into two sets $\E_2$ and $\V\setminus \E_2$, where $\E_2$ is the set of all the nodes holding either opinion $y_1$ or opinion $y_2$, and $\E_2$ is a maximal cohesive set.

Repeating the argument in the previous paragraph, we obtain the sets $\E_1,\dots, \E_{n-1}$, which are all maximal cohesive sets, and the entire update sequence $i_{1,1},\dots,i_{1,k_1},\dots, i_{n-1,1},\dots,i_{n-1,k_{n-1}}$. Define 
\begin{align*}
\V_1 & = \E_1,\\
\V_r & = \E_r\setminus \cup_{s=1}^{r-1} \E_s, \text{ for any }r=2,\dots,n-1,\\
\V_n & = \V \setminus \cup_{s=1}^{n-1} \E_s.
\end{align*}
The way we construct $\E_1\dots, \E_{n-1}$ implies that, after the update sequence $i_{1,1},\dots,i_{1,k_1},\dots,i_{n-1,1},\dots,i_{n-1,k_{n-1}}$, the system reaches a state in which, for any $r\in \{1,\dots, n\}$, all the nodes in $\V_r$ hold the opinion $y_r$ and will not switch to any other opinion. Therefore, for any initial condition $x(0)\in \mathbb{R}^n$, we have constructed an update sequence, along which the system reaches an equilibrium in finite time. This concludes the proof for the case of $R$ distinct initial opinions and thereby concludes the proof of statement~1*) by induction, which is equivalent to statement~1).

Now we proceed to prove statement~2).  If the only maximal cohesive set in $\G(W)$ is $\V$ itself, then, according to Lemma~\ref{lem:properties-cohesive-expansion}~4), the cohesive expansion of any cohesive set is $\V$ itself. Repeating the construction of the update sequence in the proof of statement~1*), we have that $\E_1$ in that proof is either an empty set or the cohesive expansion of a cohesive set, which is $\V$ itself under the condition of statement~2). If $\E_1$ is empty, then the situation is reduced to the case of $R-1$ distinct initial opinions. Following the same argument, $\E_2$ is either empty or $\V$ itself. As this argument goes on, will find some $r\in \{1,\dots,R\}$ such that $\E_r=\V$ while $\E_1,\dots, \E_{r-1}$ are all empty. That is, for any initial condition, there exists an update sequence, along which the system reaches an equilibrium where all the individuals hold the same opinion. This concludes the proof of statement~2).


Regarding statement~3), suppose there exists a maximal cohesive set $\M\subset \V$. We construct the set $X_0$ of initial conditions as
\begin{align*}
X_0 = \Big{\{}x(0)\in \mathbb{R}^n\,\Big|\, & \max_{j\in \M} x_j(0) < \min_{k\in \V\setminus \M}x_k(0),\textup{ or } \\
& \min_{j\in \M} x_j(0) > \max_{k\in \V\setminus \M}x_k(0)\Big{\}}.
\end{align*}
By definition, the set $X_0$ has non-zero Lebesgue measure in $\mathbb{R}^n$. Moreover, according to the weighted-median mechanism, starting from any $x(0)\in X_0$, the opinions of the nodes in $\M$ will always be lower (higher resp.) than the opinion of any node in $\V\setminus \M$, if $\max_{j\in \M} x_j(0) < \min_{k\in \V\setminus \M}x_k(0)$ ($\min_{j\in \M} x_j(0) > \max_{k\in \V\setminus \M}$ resp.). This concludes the proof of statement~3).  

Now we proceed to prove statement~4). By definition, $\hat{X}$ has zero Lebesgue measure in $\mathbb{R}^n$. Consider any $x(0)\in \mathbb{R}^n\setminus \hat{X}$. Since no new opinion is created along the weighted-median opinion dynamics, the system reaches consensus if and only if there exists a node $i$ and some time $T\in \mathbb{N}$ such that $x_j(T)=x_i(0)$ for any $j\in \V$. Since no individual other than $i$ initially holds the opinion $x_i(0)$, as indicated by Remark~\ref{remark:implication-decisive/indecisive}, in order for a node $j$ to adopt the opinion $x_i(0)$ at time $T$, there must exist at least one path on the influence network $\G(W)$ from $j$ to $i$ via only decisive links. According to the conditions in statement~4), $\Gdecisive(W)$ does not have a globally reachable node. Therefore, such node $i$ does not exist. This concludes the proof of statement~4).
\end{proof}

At the end of this subsection, we briefly compare the dynamical behavior of the weighted-median opinion dynamics with some widely-studied averaging-based models. Regarding the conditions for reaching consensus or dissensus, the French-DeGroot model and the Friedkin-Johnsen model lead to somewhat over-simplified outcomes: The system either almost surely achieves consensus or almost surely achieves dissensus (with respect to initial conditions), dependent on whether a single graph-theoretic condition is satisfied. Some other extensions of the French-DeGroot model, e.g., the biased assimilation model and the bounded-confidence model, are highly nonlinear such that the conditions for consensus/dissensus are too difficult to analyze.

On the other hand, as indicated by Theorem~\ref{thm:dynamics-WM}, the weighted-median opinion dynamics exhibit three different behaviors: almost-sure consensus, almost-sure dissensus, or having a non-zero probability of reaching dissensus. Which behavior the model exhibits depends on two important network structure properties: the presence of non-trivial maximal cohesive sets and the connectivity of the influence network with all the indecisive links removed. Therefore, the weighted-median opinion dynamics, despite its simplicity in form, leads to richer and yet mathematically tractable dynamical behavior.

\subsection{Further analysis: Determining almost-sure dissensus is NP-hard}
In the previous section, we establish the almost-sure finite-time convergence of the weighted-median opinion dynamics and provide a necessary and sufficient graph-theoretic condition for almost-sure consensus. A sufficient condition for almost-sure dissensus is also provided. In this subsection, we provide necessary and sufficient conditions for almost-sure convergence to dissensus equilibria, and prove that, given an arbitrary directed and weighted graph, determining whether these conditions hold is NP-hard. This result highlights the fact that median consensus yields significantly richer and more complex phenomena than average consensus, where most convergence questions can be solved by linear algebraic tools. 

The basic strategy is to relate our problem mentioned above to the monotone not-all-equal 3-satisfiability (NAE3SAT) problem, which is known to be NP-hard.
\begin{definition}[Monotone NAE3SAT]\label{def:NAE3SAT}
Given $m$ clauses $c_1,\dots, c_m$, where each clause $c_j$ is a set of three indices $\{k_j^1,k_j^2,k_j^3\}$ (A same index can appear in multiple clauses or multiple times in the same clause), find $n$ binary variables $x_1,\dots, x_n\in \{-1,1\}$ such that, for each clause $c_j$, $x_{k_j^1}$, $x_{k_j^2}$ and $x_{k_j^3}$ are not all equal.
\end{definition} 

The following result is well-known in previous literature~\cite{TJS:78}.

\begin{lemma}[Monotone NAE3SAT is NP-hard]\label{lem:NAE3SAT-NPhard}
The monotone NAE3SAT problem as described in Definition~\ref{def:NAE3SAT} is NP-hard unless P=NP.
\end{lemma}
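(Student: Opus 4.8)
The statement is classical; one clean route is to invoke Schaefer's dichotomy theorem~\cite{TJS:78}. Monotone NAE3SAT is precisely the Boolean constraint satisfaction problem over the single ternary relation $R=\{0,1\}^3\setminus\{(0,0,0),(1,1,1)\}$, and $R$ falls into none of Schaefer's six polynomial-time classes: it is neither $0$-valid nor $1$-valid because $(0,0,0),(1,1,1)\notin R$, and one checks in a few lines that it is not closed under conjunction, disjunction, the affine sum $x\oplus y\oplus z$, or the ternary majority operation, so it is not Horn, dual-Horn, affine, or bijunctive. Hence the problem is NP-complete. For a self-contained argument, the plan is instead a chain of two polynomial-time reductions starting from $3$SAT, which is NP-hard.

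First I would reduce $3$SAT to NAE3SAT \emph{with negated literals allowed}. Introduce one fresh global variable $z$, meant to carry the value ``false''. Replace each $3$SAT clause $(\ell_1\vee\ell_2\vee\ell_3)$ by the four-literal not-all-equal constraint $\mathrm{NAE}(\ell_1,\ell_2,\ell_3,z)$, then split every $\mathrm{NAE}(a,b,c,d)$ into the two ternary constraints $\mathrm{NAE}(a,b,e)$ and $\mathrm{NAE}(\neg e,c,d)$ with $e$ fresh. Correctness relies on the invariance of NAE constraints under flipping all variables simultaneously: from any not-all-equal satisfying assignment one may assume $z=\text{false}$, and then $\mathrm{NAE}(\ell_1,\ell_2,\ell_3,z)$ holds precisely when at least one $\ell_i$ is true, so the original clause is satisfied; conversely a satisfying assignment of the $3$SAT instance extends --- by setting $z=\text{false}$ and choosing the splitting variables appropriately --- to a not-all-equal satisfying assignment of the reduced instance.

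Second I would reduce NAE3SAT-with-negations to \emph{monotone} NAE3SAT. For each variable $x$ that occurs negated somewhere, introduce a new variable $\bar x$ and add the clause $\{x,x,\bar x\}$; this is permitted because Definition~\ref{def:NAE3SAT} allows an index to repeat inside a clause, and this constraint is not-all-equal satisfied exactly when $x\neq\bar x$, so $\bar x$ faithfully encodes $\neg x$. Replacing every negated occurrence $\neg x$ by the positive variable $\bar x$ yields an instance with only positive literals that is not-all-equal satisfiable if and only if the input is. Composing the two reductions gives a polynomial-time reduction from $3$SAT, so monotone NAE3SAT is NP-hard.

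The constructions themselves are short; the care lies in the correctness proofs. The Step~1 normalization ``assume $z=\text{false}$'' uses the global negation symmetry of NAE constraints, and the Step~2 gadget $\{x,x,\bar x\}$ uses the repeated-index convention of Definition~\ref{def:NAE3SAT} --- both routine but worth stating explicitly. Since the result is standard and well documented, for the paper the cleanest option, and the one actually taken, is to cite~\cite{TJS:78} directly.
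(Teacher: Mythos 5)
The paper offers no proof of this lemma at all: it is stated as a known result and attributed to Schaefer via the citation~\cite{TJS:78}, which is exactly the fallback you identify in your last sentence. Your self-contained argument is nevertheless correct and is the standard reduction chain. The two points that actually need checking both go through: (i) the splitting gadget is sound, since if $a,b,c,d$ are not all equal one can always choose $e$ so that $\mathrm{NAE}(a,b,e)$ and $\mathrm{NAE}(\neg e,c,d)$ both hold (take $e=\neg a$ when $a=b$, and otherwise choose $e$ to break $\neg e=c=d$ if needed), while conversely $a=b=c=d=v$ forces $e=\neg v$ and then the second constraint reads $\mathrm{NAE}(v,v,v)$ and fails; and (ii) the monotonization gadget $\{x,x,\bar x\}$ is legitimate precisely because Definition~\ref{def:NAE3SAT} permits a repeated index within a clause, and it enforces $\bar x\neq x$ as claimed. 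The appeal to the global negation symmetry of NAE to normalize $z$ to false is also valid and survives the split, since complementing every variable (including the fresh ones) preserves every NAE constraint. Your alternative route through Schaefer's dichotomy is likewise sound. In short, you have supplied a proof where the paper supplies only a pointer; either your reduction or the bare citation would serve, and the reduction has the advantage of being self-contained, at the cost of a page of routine case analysis.
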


As indicated by Theorem~\ref{thm:dynamics-WM}, the weighted-median opinion dynamics given by Definition~\ref{def:WM-op-dyn} almost surely achieves either a consensus equilibrium or a dissensus equilibrium in finite time, for any initial condition. The following proposition presents some equivalent statements for not achieving almost-sure dissensus. The proof is provided in Appendix~\ref{append:proof-prop:almost-sure-disagreement}.

\begin{proposition}[Almost-sure persistent dissensus]\label{thm:almost-sure-disagreement}
Given an influence network $\G(W)$ and the weighted-median opinion dynamics as in Definition~\ref{def:WM-op-dyn}, the following statements are equivalent:
\begin{enumerate}
\item There exists a set $X_0\subseteq \mathbb{R}^n$ with positive measure in $\mathbb{R}^n$ such that, for any $X(0)\in X_0$, there exists an update sequence, along which consensus is achieved in finite time;
\item There exists $x_0\in \mathbb{R}^n$ with $n$ distinct entries and an update sequence $\{i_1,\dots, i_T\}$ such that, along this update sequence, the trajectory $x(t)$ starting from $x(0)=x_0$ achieves consensus at time $T$;
\item There exists a ternary vector $y\in \{-1,0,1\}^n$, with $y_i=0$ for some $i\in \{1,\dots, n\}$ and $y_j\in \{-1,1\}$ for any $j\in \{1,\dots, n\}\setminus \{i\}$, and an update sequence $\{i_1,\dots, i_T\}$ such that the trajectory $y(t)$ starting from $y(0)=y$ reaches consensus on $0$ at time $T$, i.e., $y(T)=\vect{0}_n$. 
\end{enumerate}
The weighted-median opinion dynamics achieve persistent dissensus if and only if any of (i)-(iii) does not hold.
\end{proposition}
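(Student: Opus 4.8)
The plan is to reduce the whole proposition to one structural fact about the weighted-median map, together with the almost-sure finite-time convergence already established in Theorem~\ref{thm:dynamics-WM}. The structural fact I would isolate and prove first is a \emph{commutation lemma}: for every non-decreasing $\psi:\mathbb{R}\to\mathbb{R}$, every $i\in\V$ and every $x\in\mathbb{R}^n$, $\psi\big(\Med_i(x;W)\big)=\Med_i\big(\psi(x);W\big)$, with $\psi$ acting entrywise. That $\psi$ carries any weighted median of $x$ in the sense of Definition~\ref{def:weighted-median-in-general} to a weighted median of $\psi(x)$ is immediate, since $\psi$ can only shrink $\{j:x_j<x^*\}$ and $\{j:x_j>x^*\}$. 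The real content is that the ``closest to $x_i$'' tie-breaking rule of Definition~\ref{def:WM-op-dyn} is respected as well; I would argue this by cases on whether $x_i$ is itself a weighted median of $x$ and, if not, on which side of the weighted-median block it lies, showing each time that no weighted median of $\psi(x)$ can intrude strictly between $\psi(x_i)$ and $\psi(\Med_i(x;W))$. Iterating along an update sequence $\sigma$ then yields the usable form: the trajectory from $\psi(x(0))$ run under $\sigma$ equals $\psi$ applied entrywise to the trajectory from $x(0)$ run under $\sigma$. Two instances are needed below, namely $\psi$ strictly increasing and the threshold map $\phi_c$ with $\phi_c(r)=-1,0,1$ according as $r<c$, $r=c$, $r>c$.

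With the lemma in hand the three conditions chain together. (i)$\Rightarrow$(ii): a positive-measure $X_0$ meets $\mathbb{R}^n\setminus\hat X$, hence contains a point with $n$ distinct entries, from which (i) supplies a consensus-reaching sequence. (ii)$\Rightarrow$(iii): for $x_0$ with $n$ distinct entries reaching consensus at time $T$ under $\sigma$, the consensus value is $(x_0)_{i^\star}=:c$ for a unique node $i^\star$ (no opinion outside the initial support is ever created); the lemma with $\psi=\phi_c$ makes the trajectory from $y:=\phi_c(x_0)$ under $\sigma$ equal $\phi_c$ of the original one, hence equal to $\vectorzeros[n]$ at time $T$, and $y$ is ternary with $y_{i^\star}=0$ and $y_j\in\{-1,1\}$ for $j\neq i^\star$. (iii)$\Rightarrow$(ii): lift $y$ to any $x_0$ with $n$ distinct entries and $\phi_0(x_0)=y$ (spread the $-1$-coordinates over distinct negatives, the $1$-coordinates over distinct positives, keep the single $0$); the lemma with $\psi=\phi_0$ makes $\phi_0$ of the $x_0$-trajectory under $\sigma$ equal the $y$-trajectory, which is $\vectorzeros[n]$ at time $T$, so $x(T)=\vectorzeros[n]$ since $\phi_0^{-1}(0)=\{0\}$. (ii)$\Rightarrow$(i): the strictly increasing images of $x_0$ sweep out the connected component of $\mathbb{R}^n\setminus\hat X$ containing $x_0$, a set of positive Lebesgue measure every point of which reaches consensus under the same $\sigma$; take it for $X_0$. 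This establishes (i)$\Leftrightarrow$(ii)$\Leftrightarrow$(iii).

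For the final claim I would use that, by the almost-sure finite-time convergence, for a fixed $x(0)$ the dynamics almost surely enters some equilibrium, so it fails to almost surely reach persistent dissensus precisely when there is positive probability of reaching a consensus vector (which is absorbing). If (i) holds, then each $x(0)$ in the positive-measure set $X_0$ admits a finite sequence $(i_1,\dots,i_T)$ to consensus, which the random update process realises with probability $\ge(1/n)^{T}>0$, so no such $x(0)$ gives almost-sure dissensus and the dynamics does not almost surely reach persistent dissensus. Conversely, if almost-sure persistent dissensus fails, a positive-measure set of $x(0)$ has positive probability of reaching consensus; for such an $x(0)$, choosing $T$ with $\Pr(\text{consensus by }T\mid x(0))>0$ and expanding that event over the finitely many length-$T$ update sequences, at least one sequence drives $x(0)$ to consensus, so (i) holds. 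Combined with the equivalences above, this gives the stated iff. The one genuinely delicate step is the commutation lemma and, inside it, the tie-breaking when $\psi$ merges distinct opinion values so that $\psi(x)$ acquires weighted medians with no counterpart for $x$; the remaining measure-theoretic bookkeeping and the positive-probability argument — including the routine measurability of $x(0)\mapsto\Pr(\text{reach consensus}\mid x(0))$ — are standard.
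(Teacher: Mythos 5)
Your proposal is correct, and it rests on the same underlying fact as the paper's proof — that the weighted-median update depends only on the ordering of the opinions — but it organizes that fact differently. The paper proves order-equivariance twice in ad hoc form: for (ii)$\Rightarrow$(i) it perturbs $x_0$ inside an $\ell_\infty$-ball of radius half the minimal gap between entries, and for (ii)$\Leftrightarrow$(iii) it runs an induction showing that the sets $\V_1(t),\V_2(t)$ of nodes below/above the eventual consensus value coincide at every step with the sets of nodes holding $-1$ and $+1$ in the ternary system. You subsume both into a single commutation lemma for arbitrary non-decreasing $\psi$, instantiated once with a strictly increasing map and once with the threshold map $\phi_c$. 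That lemma is true, and the tie-breaking worry you rightly flag does resolve: one can check that the set of weighted medians of $\psi(x)$ for row $i$ is exactly the image under $\psi$ of the set of weighted medians of $x$ (merging values lying strictly below the smallest weighted median cannot create a new one, because the total weight strictly above any such value exceeds $1/2$, and symmetrically above the largest), and since the weighted medians form a contiguous block of values with $\Med_i(x;W)$ the endpoint nearest $x_i$, no weighted median of $\psi(x)$ can lie strictly between $\psi(x_i)$ and $\psi\big(\Med_i(x;W)\big)$. Your route buys a reusable lemma and a larger witness set in (ii)$\Rightarrow$(i) (the full order cone, i.e., the connected component of $\mathbb{R}^n\setminus\hat X$, rather than a small ball); the paper's route buys shorter, self-contained computations. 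You also prove the final clause — tying (i)--(iii) to almost-sure persistent dissensus via the absorbing-chain structure of Theorem~\ref{thm:dynamics-WM}(1) and the $(1/n)^T$ lower bound on realizing a prescribed finite update sequence — which the paper's appendix asserts but does not argue; that addition is correct and worth keeping. The only caveat is that the commutation lemma itself is sketched rather than proved in your write-up, so the case analysis for the tie-breaking rule must actually be carried out along the lines above before the argument is complete.
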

\smallskip

From statement~3) of Proposition~\ref{thm:almost-sure-disagreement} we know that, in order to check whether trajectories of the weighted-median opinion dynamics almost surely achieve persistent dissensus, we need to check only a finite set of initial conditions, whose entries take values from $\{-1,0,1\}$. However, in what follows, we show that, given any influence network $\G(W)$, determining whether statement~3) in Proposition~\ref{thm:almost-sure-disagreement} holds is an NP-hard problem. We tackle this problem in the following steps:
\begin{enumerate}[label={(\arabic*)}]
\item We show that, for any instance of the monotone NAE3SAT problem, we can build an associated influence network $\G(W)$ such that the NAE3SAT instance admits a solution if and only if the weighted-median opinion dynamics on $\G(W)$ do not almost surely achieve dissensus equilibria. Moreover, this construction can be done in polynomial time with respect to the size of the initial monotone NAE3SAET problem.
\item Step (1) above implies that, if we had an algorithm to determine in polynomial time whether the weighted-median opinion dynamics on any influence network $\G(W)$ achieves almost-sure dissensus, we could then combine it with the construction in Step (1) to create an algorithm solving monotone NAE3SAT in polynomial time. That is, our problem is solvable in polynomial time only if NAE3SAT is.
\item The NP-hardness of the monotone NAE3SAT, unless N=NP, implies then the NP-hardness of determining whether the weighted-median opinion dynamics achieve almost-sure dissensus.
\end{enumerate}
The key in the above steps is to construct a class of influence networks and relate the testing of statement~3) in Proposition~\ref{thm:almost-sure-disagreement} on these networks to an instance of the monotone NAE3SAT problem.

\begin{figure*}
\centering
\includegraphics[width=0.7\linewidth]{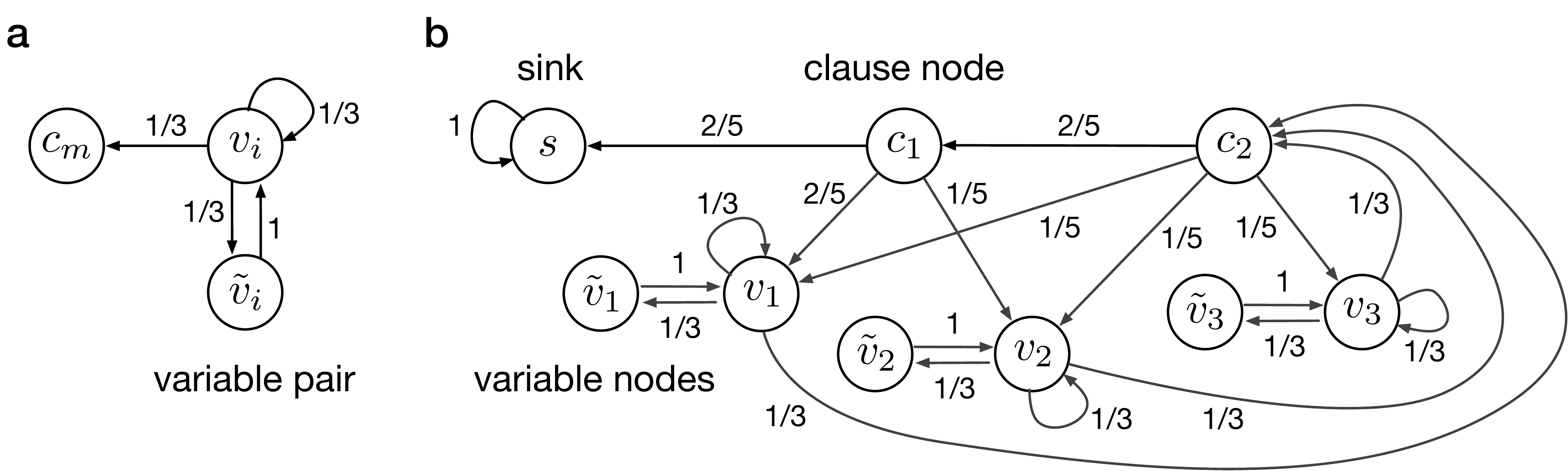}
\caption{Illustrations of the structure of the sink-variable-clause graphs. Panel a shows how a pair of variable nodes are connected to each other and to the last clause node. Panel b shows a simple example of an SVC graph with $m=2$ and $n=3$. In this graph, variable pairs $(v_1,\tilde{v}_1)$, $(v_2,\tilde{v}_2)$ are associated with clause node $c_1$, and variable pairs $(v_1,\tilde{v}_1)$, $(v_2,\tilde{v}_2)$, $(v_3,\tilde{v}_3)$ are associated with clause node $c_2$.}
\label{fig:svc-graph}
\end{figure*}

Given a monotone NAE3SAT problem, we construct the following specific class of influence networks and refer to them as ``sink-variable-clause'' (SVC) graphs.

\begin{definition}[Sink-variable-clause graphs]\label{def:SVCgraphs}
Given a monotone NAE3SAT problem with $n$ binary variables and $m$ clauses as in Definition~\ref{def:NAE3SAT}, a associated sink-variable-clause (SVC) graph is constructed as a weighted and directed graph with $2n+m+1$ nodes classified into 3 categories:
\begin{itemize}
\item the sink node with the index $s$;
\item the variable nodes that come in pairs ($n$ pairs in total). The indices for each pair are denoted by $(v_i,\tilde{v}_i)$, $i\in \{1,\dots,n\}$;
\item the clause nodes, with the indices $c_j$, $j\in \{1,\dots, m\}$. 
\end{itemize}
Nodes in the sink-variable-clause graph are connected with each other obeying the following rules:
\begin{enumerate}
\item The sink node has a self loop with weight 1 on itself;
\item For each $i\in \{1,\dots,n\}$, we have a pair of variable nodes $(v_i,\tilde{v}_i)$. Node $\tilde{v}_i$ assigns 100\% of its weight on node $v_i$. Node $v_i$ assigns a weight 1/3 to itself, a weight 1/3 to node $\tilde{v}_i$, and a weight 1/3 to the last clause node $c_m$, see Fig.~\ref{fig:svc-graph}\textbf{a};
\item Any clause node $c_j$ is associated with three variable nodes $k_j^1$, $k_j^2$, $k_j^3 \in \{v_1,\dots,v_n\}$, at most two of which might be the same. The clause node $c_j$ assigns a weight 1/5 to each of these 3 associated variable nodes (or 2/5 if an index appears twice), and a weight 2/5 to the previous clause node $c_{j-1}$ if $j>1$, or a weight 2/5 to the sink node $s$ if $j=1$, see Fig.~\ref{fig:svc-graph}\textbf{b};
\item Each variable node is linked to by at least one clause node. That is, $\{k_1^1,k_1^2,k_1^3,\dots, k_m^1,k_m^2,k_m^3\}=\{v_1,\dots,v_n\}$.
\end{enumerate}
\end{definition}

The idea of constructing the above SVC graph is as follows: Firstly, if the sink node starts with opinion $0$, it remains 0 all the time and this 0 shall propagate to the entire graph if the other nodes' initial opinions and update sequence are chosen properly; Secondly, each pair of variable nodes start with opposing opinions -1 and 1, and we keep them from changing their opinions until all the clause nodes have their opinions updated to 0; Thirdly, if the clause nodes start with opinions from $\{-1,\,1\}$ and update their opinions sequentially according to their indices $c_1,\, c_2,\, \dots, \, c_m$, their opinions will all become $0$, if and only if the variable nodes' opinions satisfy the rules of monotone NAE3SAT. Then the variable nodes' opinions can all be updated to 0 following some update sequence. The following lemma will show that certain of the properties mentioned above do indeed hold. This lemma is proved by directly applying Definition~\ref{def:WM-op-dyn}.

\begin{lemma}[Properties of SVC graphs]\label{lem:properties-SVC}
For any sink-variable-clause graph constructed as in Definition~\ref{def:SVCgraphs}, the following statements hold:
\begin{enumerate}
\item For any $i\in \{1,\dots, n\}$, if $x_{c_m}(t)=0$ and $x_{v_i}(t)=-x_{\tilde{v}_i}(t)\neq 0$ for some $t$, then an update of node $v_i$ followed by an update of node $\tilde{v}_i$ leads to $x_{v_i}(t+2)=x_{\tilde{v}_i}(t+2)=0$;
\item As long as $x_{v_i}(t)=x_{v_i}(0)\neq 0$ for any $i\in \{1,\dots, n\}$, node $c_j$ ($j\in \{1,\dots, m\}$) can update to 0 from time step $t$ to $t+1$ if and only if the following conditions are simultaneously satisfied:
   \begin{enumerate}
   \item $x_{c_{j-1}}(t)=0$ for $j>1$ and $x_s(t)=0$ for $j=1$. The latter holds as long as $x_s(0)=0$;
   \item $x_{k_j^1}(t)$, $x_{k_j^2}(t)$, and $x_{k_j^3}(t)$ are not all equal.
   \end{enumerate}
\end{enumerate}
\end{lemma}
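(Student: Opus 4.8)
The plan is to verify both statements by evaluating the weighted median at the node being updated directly from Definition~\ref{def:weighted-median-in-general} and Definition~\ref{def:WM-op-dyn}, using the explicit edge weights prescribed in Definition~\ref{def:SVCgraphs}; throughout I work in the regime in which the lemma is applied, i.e.\ with every variable-node opinion lying in $\{-1,1\}$ and with $x_{v_i}(t)=-x_{\tilde{v}_i}(t)$ inside each pair. I also use the standing observation that the sink node carries a unit self-loop, so $\Med_s(x(t);W)=x_s(t)$ and hence $x_s(t)=x_s(0)$ for all $t$; in particular $x_s(0)=0$ forces $x_s(t)=0$ for all $t$, which is precisely the last sentence of part~(a) in statement~2).

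For statement~1): at time $t$ the three out-neighbors of $v_i$ --- namely $v_i$, $\tilde{v}_i$ and $c_m$ --- carry weight $1/3$ each and hold the values $a:=x_{v_i}(t)$, $-a=x_{\tilde{v}_i}(t)$ and $0=x_{c_m}(t)$ with $a\neq 0$. A short check from Definition~\ref{def:weighted-median-in-general} --- each of $\{i: x_i<0\}$ and $\{i: x_i>0\}$ carries weight $1/3<1/2$, whereas $a$ and $-a$ each have $2/3$ of the mass strictly to one side --- shows that $0$ is the unique weighted median, so $x_{v_i}(t+1)=0$ and the tie-breaking rule is not invoked. Updating $\tilde{v}_i$ next, its only out-neighbor is $v_i$ with weight $1$, so its weighted median is $x_{v_i}(t+1)=0$, giving $x_{\tilde{v}_i}(t+2)=0$ while $x_{v_i}$ is untouched at step $t+2$. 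This settles statement~1).

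For statement~2), fix $j$ and write $u$ for the ``predecessor'' out-neighbor of $c_j$ (so $u=c_{j-1}$ if $j>1$ and $u=s$ if $j=1$), which carries weight $2/5$; the remaining weight $3/5$ is spread over the variable nodes among $k_j^1,k_j^2,k_j^3$, each of the three slots contributing $1/5$. For necessity: a weighted median is always one of the out-neighbors' current opinions and every variable opinion is nonzero, so $\Med_{c_j}(x(t);W)=0$ forces $x_u(t)=0$, which is part~(a); moreover, if part~(a) holds but part~(b) fails, i.e.\ $x_{k_j^1}(t)=x_{k_j^2}(t)=x_{k_j^3}(t)=a\in\{-1,1\}$, then one strict side of $0$ (the value $a$) carries weight $3/5>1/2$, so $0$ is not a weighted median and $c_j$ cannot update to $0$. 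For sufficiency: assuming (a) and (b), the value $0$ has weight $2/5$, while --- since at most one variable index is repeated --- the remaining mass $3/5$ splits as weight $w_-$ on $-1$ and weight $w_+$ on $1$ with $w_-+w_+=3/5$ and, by (b), $w_-,w_+\ge 1/5$, hence $w_-,w_+\le 2/5<1/2$; one then checks directly from Definition~\ref{def:weighted-median-in-general} that $0$ is a weighted median while neither $-1$ nor $1$ is (each having more than $1/2$ of the mass strictly to one side), so $0$ is the unique weighted median and $x_{c_j}(t+1)=0$.

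Since the computations involve only the fixed weight patterns $1/3,1/3,1/3$ at a variable node and $2/5$ together with $1/5,1/5,1/5$ at a clause node, I do not anticipate a genuine obstacle; the two points warranting care are (i) the bookkeeping when a variable index occurs twice in a clause, which merely merges two $1/5$'s into a single $2/5$ and therefore keeps every ``one-sided'' weight at most $2/5<1/2$, and (ii) confirming that the weighted median is \emph{unique} in each case, so that the ``closest-to-current-opinion'' tie-breaking rule of Definition~\ref{def:WM-op-dyn} is never triggered. The substance of the lemma is exactly that the $2/5$-versus-$3/5$ split is arranged so that a clause node fires to $0$ precisely when its predecessor already equals $0$ and its three variable slots are not monochromatic.
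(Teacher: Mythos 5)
Your proof is correct and takes the same approach the paper intends: the paper offers no written proof beyond the remark that the lemma follows ``by directly applying Definition~\ref{def:WM-op-dyn},'' and your argument is precisely that direct verification, carried out with the explicit weight patterns $1/3,1/3,1/3$ and $2/5,1/5,1/5,1/5$. Your two cautionary points are well placed --- in particular, since no partial sum of these weights equals $1/2$, the weighted median is always unique and carried by an out-neighbor, which is exactly what your necessity step needs.
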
 

With all the preparation work above, now we state and prove the main theorem.
\begin{theorem}[Testing almost-sure dissensus is NP-hard]\label{thm:NP-hard}
Given an influence graph $\G(W)$, the problem of determining whether statement~3) in Proposition~\ref{thm:almost-sure-disagreement} holds is NP-hard. Hence, unless P=NP, no polynomial time algorithm can solve it.
\end{theorem}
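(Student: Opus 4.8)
The plan is to carry out the reduction from monotone NAE3SAT sketched in steps~(1)--(3) preceding the theorem, using the sink-variable-clause graph of Definition~\ref{def:SVCgraphs} as the gadget. Given an instance of monotone NAE3SAT with $n$ variables and $m$ clauses, its associated SVC graph $\G(W)$ has $2n+m+1$ nodes and a weight structure that is local and explicitly prescribed, so it can be written down in time polynomial in $n+m$. Hence it suffices to establish the equivalence \emph{``the NAE3SAT instance is satisfiable if and only if statement~3) of Proposition~\ref{thm:almost-sure-disagreement} holds for $\G(W)$''}: an algorithm deciding statement~3) in polynomial time would then decide monotone NAE3SAT in polynomial time, and the NP-hardness follows from Lemma~\ref{lem:NAE3SAT-NPhard}.

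For the implication ``satisfiable $\Rightarrow$ statement~3)'', given a satisfying assignment $a\in\{-1,1\}^n$ I would take the ternary vector $y\in\{-1,0,1\}^{2n+m+1}$ with $y_s=0$, $y_{v_i}=a_i$, $y_{\tilde v_i}=-a_i$, and $y_{c_j}$ arbitrary in $\{-1,1\}$; this $y$ has exactly one zero coordinate, as required. Along the update sequence that first updates $c_1,c_2,\dots,c_m$ in this order and then, for each $i$, updates $v_i$ followed by $\tilde v_i$, statement~2) of Lemma~\ref{lem:properties-SVC} (the variable nodes are untouched during the first phase, and each clause is not-all-equal under $a$) drives every $c_j$ to $0$; once $x_{c_m}=0$, statement~1) of the same lemma drives each pair $(v_i,\tilde v_i)$ to $0$; and $x_s\equiv 0$. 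Thus the trajectory reaches $\vect{0}_{2n+m+1}$, i.e.\ statement~3) holds.

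The substantive direction is ``statement~3) $\Rightarrow$ satisfiable''. Given a witnessing ternary $y$ with a single zero entry and an update sequence that reaches $\vect{0}_{2n+m+1}$ at some time $T$, I would argue in four stages. (i) Since the sink node $s$ has only a self-loop of weight $1$, $x_s(t)\equiv x_s(0)$; reaching consensus on $0$ forces $x_s(0)=0$, so the unique zero of $y$ sits at $s$ and all variable and clause nodes start in $\{-1,1\}$. (ii) If some pair had $y_{v_i}=y_{\tilde v_i}=a\neq 0$, then since $\tilde v_i$ only copies $v_i$ and the weighted median of $v_i$ over the equal-weight triple $\{a,a,x_{c_m}(t)\}$ equals $a$ for every value of $x_{c_m}(t)$, that pair would stay frozen at $a\neq 0$ forever, contradicting $x(T)=\vect{0}_{2n+m+1}$; hence $y_{\tilde v_i}=-y_{v_i}=:-a_i$ for all $i$. (iii) Let $\tau$ be the first time $x_{c_m}=0$ (finite, as $x_{c_m}(T)=0$). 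By induction on $t<\tau$ I would show that every pair remains frozen at $(a_i,-a_i)$: a variable node can turn to $0$ only when $x_{c_m}=0$; updating $\tilde v_i$ before $\tau$ would make the pair equal at a nonzero value and hence absorbing (contradiction); and updating $v_i$ while $x_{c_m}=-a_i$ would flip $x_{v_i}$ to $-a_i$, again making the pair absorbing at a nonzero value, so $v_i$ may be updated before $\tau$ only while $x_{c_m}=a_i$, which keeps it at $a_i$. (iv) Finally, let $\tau_j$ be the first time $x_{c_j}=0$; since before $\tau$ the variable nodes hold values in $\{-1,1\}$ and $c_j$'s only non-variable out-neighbor is $c_{j-1}$ (or $s$ when $j=1$), opinion $0$ can enter $c_j$ only through $c_{j-1}$, which forces $\tau_1<\tau_2<\dots<\tau_m=\tau$; applying statement~2) of Lemma~\ref{lem:properties-SVC} at time $\tau_j-1<\tau$ (the variables are frozen and $c_j$ does update to $0$) then yields that clause~$j$ is not-all-equal under $a$. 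So $a$ satisfies the NAE3SAT instance, completing the equivalence and the proof.

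I expect stage~(iii) to be the main obstacle: it is the point at which one must exclude the spurious executions in which variable nodes oscillate between $-1$ and $1$ before $c_m$ becomes $0$. The key fact that makes it go through is that any such flip --- or any premature update of a $\tilde v_i$ --- collapses a variable pair into a common nonzero value, which is an absorbing configuration for that pair and hence incompatible with eventual consensus on $0$; once this is in hand, the remaining stages are routine weighted-median computations on the low-degree nodes of the SVC graph, much of which Lemma~\ref{lem:properties-SVC} already packages.
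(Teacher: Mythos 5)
Your proposal is correct and follows essentially the same route as the paper: the same SVC-graph reduction from monotone NAE3SAT, the same construction and update sequence $c_1,\dots,c_m,v_1,\tilde v_1,\dots,v_n,\tilde v_n$ for the ``if'' direction, and the same key freezing argument for the ``only if'' direction, namely that collapsing a pair $(v_i,\tilde v_i)$ onto a common nonzero value is absorbing (the paper phrases this via $\{v_i,\tilde v_i\}$ being a cohesive set) and hence incompatible with consensus on $0$, which forces $\tau_1<\dots<\tau_m$ and the not-all-equal condition clause by clause. Your stage~(iii) simply spells out the case analysis that the paper compresses into that cohesive-set observation.
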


\begin{proof}
Firstly of all, we point out that, according to Definition~\ref{def:SVCgraphs}, an SVC graph with any given $n$ and $m$ can be constructed in polynomial time.  

The main part of this proof is to show that, for any SVC graph with $2n+m+1$ nodes constructed according to Definition~\ref{def:SVCgraphs}, statement~3) in Proposition~\ref{thm:almost-sure-disagreement} hold if and only if there exists a solution to the associated monotone NAE3SAT with $m$ clauses and $n$ variables.

Regarding the ``if'' part, given any variable vector $x^*\in \mathbb{R}^{n}$ satisfying a monotone NAE3SAT problem with $m$ clauses, let $\G(W)$ be the associated SVC graph with $2n+m+1$ nodes. We now construct an initial condition $y(0)\in \mathbb{R}^{2n+m+1}$ and an update sequence, along which the weighted-median opinion dynamics over $\G(W)$ reach consensus on 0 in finite time. The initial condition $y(0)$ is constructed as follows:
\begin{align*}
y_s(0) & = 0,\\
y_{v_i}(0) & = x_{v_i}^*,\quad y_{\tilde{v}_i}(0)=-x_{v_i}^*,\quad \forall\, i=1,\dots,n,\\
y_{c_j}(0) & \in \{-1,1\},\quad \forall j=1,\dots, m.
\end{align*}
The update sequence is updated as follows: We first update the opinion of the clause node $c_1$. Note that $y_{k_1^1}(0)=x_{k_1^1}^*$, $y_{k_1^2}(0)=x_{k_1^2}^*$ and $y_{k_1^3}(0)=x_{k_1^3}^*$ are in the set $\{-1,1\}$ and are not all equal. Therefor, according to how the SVC graph $\G(W)$ is constructed, we have
\begin{align*}
\sum_{j:\, y_j(0)=-1} w_{c_1 j} &\le 2/5 < 1/2,\\
\sum_{j:\, y_j(0)=1} w_{c_1 j} &\le 2/5< 1/2,\quad\text{and}\\
\sum_{j:\, y_j(0)=0} w_{c_1 j} &= w_{c_1 s}=2/5 > 0.
\end{align*} 
That is, the opinion of $c_1$ will be updated to $0$ at time 1. Then we update the opinion of node $c_2$. Since $y_{k_2^1}(1)=y_{k_2^1}(0)=x_{k_2^1}^*$, $y_{k_2^2}(1)=y_{k_2^2}(0)=x_{k_2^2}^*$ and $y_{k_2^3}(1)=y_{k_2^3}(0)=x_{k_2^3}^*$ are in the set $\{-1,1\}$ and are not all equal, and since $y_{c_1}(1)=0$, according to how $\G(W)$ is constructed, we immediately have that the opinion of $c_2$ will be updated to 0 at time $2$. According to the same argument and following the update sequence $\{c_1,\,c_2,\, \dots, \, c_m\}$, we have
\begin{align*}
y_{c_1}(m)=\dots=y_{c_m}(m)=0.
\end{align*}
Now we update the opinions of the variable pairs in the following way. For each variable pair $\{v_i,\tilde{v}_i\}$, note that at time $m$ one of these two nodes' opinions is 1 and the others' is -1. Moreover, we have $y_{c_m}(m)=0$. Therefore, according to how $\G(W)$ is constructed, an update of node $v_i$'s opinion followed by an update of node $\tilde{v}_i$'s opinion will make their opinions both become 0. As the result, after $2n$ time steps, all the variable nodes' opinions will also be updated to $0$. In summary, for the SVC graph, we have constructed an initial condition $y(0)$ and an update sequence $\{c_1,\dots,c_m,v_1,\tilde{v}_1,\dots,v_n,\tilde{v}_n\}$, along which the weighted-median opinion dynamics reach consensus on the opinion 0 at time $2n+m$. This concludes the proof for the ``if'' part.

Regarding the ``only if'' part, suppose that there exist $y(0)\in \mathbb{R}^{2n+m+1}$ and an update sequence that together satisfy statement~3) in Proposition~\ref{thm:almost-sure-disagreement} on an SVC graph $\G(W)$. We now prove that the vector $x^*$, with $x_{k_i^j}^*=y_{k_i^j}(0)$ for any $i\in \{1,\dots,m\}$ and $j\in \{1,2,3\}$, must satisfy the rules for the associated monotone NAE3SAT.

For $y(0)$ and the update sequence given above that lead to consensus on opinion 0, we first conclude that $y_s(0)$ must be 0, since the sink node $s$ can never change its opinion. Then we have $y_i(0)\in \{1,-1\}$ for any $i\in \{1,\dots,2n+m+1\}\setminus \{s\}$. Denote by $T_m$ the first time step at which the opinion of $c_m$ is updated to 0. For any $i\in \{1,\dots, n\}$, according to how the out-links of the nodes $v_i$ and $\tilde{v}_i$ are constructed, we know that, up to time $T_m$, none of the opinions of $v_i$ and $\tilde{v}_i$'s out-neighbors in $\G(W)$ has ever been updated to 0. That is, we have $y_{v_i}(t)\in \{-1,1\}$ and $y_{\tilde{v}_i}(t)\in \{-1,1\}$ for any $t\le T_m$. Moreover, since $\{v_i,\tilde{v}_i\}$ forms a cohesive set in $G(W)$, we have that $y_{v_i}(t)=-y_{\tilde{v}_i}(t)\neq 0$ for any $t\le T_m$. Otherwise, there exists some $t_i\le T_m$ such that $y_{v_i}(t)=y_{\tilde{v}_i}(t)\neq 0$ for any $t\ge t_i$, which will contradict the pre-assumption that $y(t)$ eventually reaches consensus on opinion $0$. Furthermore, since $y_{v_i}(t)=-y_{\tilde{v}_i}(t)\neq 0$ remains true for all $t\le T_m$ and only one node can update its opinion at each time, we have that neither node $v_i$ nor node $\tilde{v}_i$ has ever updated their opinion before $T_m$. Therefore, for any $i\in \{1,\dots,n\}$ and any $t\le T_m$,
\begin{align*}
y_{v_i}(t)=y_{v_i}(0)=-y_{\tilde{v}_i}(0) = -y_{\tilde{v}_i}(t).
\end{align*}
Now we look at the node $c_m$. By assumption,node $c_m$'s opinion is updated to $0$ at time $T_m$, i.e., the weighted-median opinion for node $c_m$ at time $T_m-1$ is 0. In addition, since none of $y_{k_m^1}(T_m-1)$, $y_{k_m^2}(T_m-1)$ and $y_{k_m^3}(T_m-1)$ is 0, the following two statements must simultaneously hold:
\begin{enumerate}
\item $y_{k_m^1}(T_m-1)$, $y_{k_m^2}(T_m-1)$ and $y_{k_m^3}(T_m-1)$ are not all equal, which implies that $y_{k_m^1}(0)$, $y_{k_m^2}(0)$ and $y_{k_m^3}(0)$ are not all equal.
\item $y_{c_{m-1}}(T_m-1)=0$. Denote by $T_{m-1}$ the first time that $y_{c_{m-1}}(t)$ is updated to 0. Then we immediately have $T_{m-1}<T_{m}$.
\end{enumerate}
Apply the same argument above to the clause nodes $c_{m-1},\dots, c_1$ sequentially, we have $T_1<T_2<\dots <T_m$, and, moreover, $y_{k_j^1}(0)$, $y_{k_j^2}(0)$, $y_{k_j^3}(0)$ are not all equal for any $j\in \{1,\dots, m\}$. Let $x^*$ be such that 
\begin{align*}
x_{k_i^j}^* = y_{k_i^j}(0)
\end{align*}
for any $i\in \{1,\dots, m\}$ and any $j\in \{1,2,3\}$. Then $x^*$ satisfies the rules for the associated monotone NAE3SAT. This concludes the proof for the ``only if'' part. 

Now we have proved that for the specific class of sink-variable-clause graphs, testing statement~3) in Proposition~\ref{thm:almost-sure-disagreement} is equivalent to finding a solution to the monotone NAE3SAT problem, which is known to be NP-hard. Suppose for any graph with arbitrary topology, testing statement~3) in Proposition~\ref{thm:almost-sure-disagreement} is not NP-hard, then the testing method can also be applied to the specific class of sink-variable-clause graphs constructed, and thus the latter is not NP-hard, which leads to a contradiction. Therefore, in general, testing statement~3) in Proposition~\ref{thm:almost-sure-disagreement} for any arbitrary graph is NP-hard. 
\end{proof}

Note that Theorem~\ref{thm:NP-hard} does not exclude the possibility that, for some specific class of graphs, testing whether statement~3) in Theorem~\ref{thm:NP-hard} holds is not an NP-hard problem. 

\section{Conclusion}

In this paper, we conduct a thorough theoretical analysis of the rich dynamical behavior of the weighted-median opinion dynamics, previously exhibited via simulations in~\cite{WM-FB-GC-JH-FD:22}. We characterize the set of all the equilibria and establish the almost-sure finite-time convergence of the dynamics. A necessary and sufficient condition for the almost-sure convergence and a sufficient condition for persistent dissensus are given. All these important behavior of the weighted-median opinion dynamics are related to some delicate structures of the influence networks, such as the cohesive sets and decisive links. The rich dynamical behavior revealed by the theoretical analysis in this paper, together with other desirable properties discussed via experimental validation and numerical comparisons in~\cite{WM-FB-GC-JH-FD:22}, supports the weighted-median mechanism as a well-founded micro-foundation of opinion dynamics. In terms of future research directions, various meaningful extensions could be made by incorporating the weighted-median mechanism with previous important extensions to the French-DeGroot model, e.g., the continuous-time weighted-median model, the persistent attachment to initial opinions or the presence of negative weights characterizing antagonistic relations.

\appendices
\section{Proof of Lemma~\ref{lem:uniqueness-cohesive-expansion}}\label{append:proof-uniqueness-cohesive-expansion}
We prove the uniqueness of cohesive expansion by contradiction. For any cohesive set $\M\subset \V$, suppose that $\E_1 = \M\cup (i_1,\dots, i_k)$ and $\E_2 = \M\cup (j_1,\dots,j_{\ell})$ are two distinct cohesive expansions of $\M$. Here the ordered sets $(i_1,\dots, i_k)$ and $(j_1,\dots, j_l)$ indicate the orderings of node additions along the corresponding cohesive expansions. Since $\E_1\neq \E_2$ means that one of these two sets must contain at least one element that is not in the other, without loss of generality, suppose that there exists some $s_0\in \{1,\dots,l\}$ such that $j_{s_0}\notin (i_1,\dots, i_k)$. Now we prove that this is impossible. 

First of all, $s_0$ cannot be $1$, otherwise
\begin{align*}
\sum_{r\in \E_1}w_{j_1 r}\ge \sum_{r\in \M}w_{j_1 r}>1/2
\end{align*}
implies that $\E_1$ can be further expanded to $\E_1\cup (j_1)$. Secondly, there must exist $s_1\in \{1,\dots,s_0-1\}$ such that $j_{s_1}\notin (i_1,\dots,i_k)$, otherwise $\M\cup (j_1,\dots,j_{s_0-1})\subset \E_1$ and
\begin{align*}
\sum_{r\in \E_1} w_{j_{s_0}r} \ge \sum_{r\in \M\cup (j_1,\dots,j_{s_0-1})} w_{j_{s_0 }r}>1/2,
\end{align*}
which implies that $\E_1$ can be further expanded to $\E_1\cup (j_{s_0})$. As the same argument goes on, we will obtain that $j_1\notin (i_1,\dots,i_k)$. But we have already shown in this paragraph that $j_1 \notin (i_1,\dots,i_k)$ cannot be true. Therefore, $\E_1\neq \E_2$ cannot be true. This concludes the proof.

\section{Proof of Lemma~\ref{lem:properties-cohesive-expansion}}\label{append:proof-lem:properties-cohesive-expansion}
The proof of statement~1) is straightforward: For any $i\in \M_1\cup \M_2$, since either $i\in \M_1$ or $i\in \M_2$, we have
\begin{align*}
\sum_{j\in \M_1\cup \M_2}w_{ij}\ge \max \left\{ \sum_{j\in \M_1} w_{ij},\, \sum_{j\in \M_2} w_{ij} \right\}\ge \frac{1}{2}.
\end{align*}
Therefore, $\M_1\cup\M_2$ is cohesive.

Let $\textup{Expansion}(\M)=\M\cup (i_1,\dots,i_k)$, where the indices $1,\dots,k$ indicate the ordering of node additions along the cohesive expansion of $\M$. For any given $p\in \{1,\dots, k\}$, according to Definition~\ref{def:cohesive-expansion} and since $\M\subseteq \tilde{\M}$, we have
\begin{align*}
\sum_{j\in \tilde{M}\cup \{i_s:\, s<p\}} w_{i_p j}\ge \sum_{j\in \M\cup \{i_s:\, s<p\}} w_{i_p j}>\frac{1}{2}.
\end{align*}
Therefore, all of the nods $i_1,\dots, i_k$ are included in the cohesive expansion of $\tilde{M}$, which in turn implies $\textup{Expansion}(\M)\subseteq \textup{Expansion}(\tilde{\M})$. This concludes the proof of Statement~2).

According to Lemma~\ref{lem:uniqueness-cohesive-expansion}, since $\M\subseteq \M\cup \tilde{\M}$ and $\tilde{\M}\subseteq \M\cup \tilde{\M}$, we have  $\textup{Expansion}(\M)\subseteq \textup{Expansion}(\M\cup\tilde{\M})$ and $\textup{Expansion}(\tilde{\M})\subseteq \textup{Expansion}(\M\cup\tilde{\M})$. Therefore,  $\textup{Expansion}(\tilde{\M})\cup \textup{Expansion}(\M) \subseteq \textup{Expansion}(\M\cup\tilde{\M})$. This concludes the proof of statement~3).

With Statement~3), the proof of statement~4) becomes easy. Let $\tilde{\M}$ be any arbitrary maximal cohesive set such that $\M\subseteq \tilde{\M}$. According to how cohesive expansions are constructed as in Definition~\ref{def:cohesive-expansion}, we have $\textup{Expansion}(\tilde{M})=\tilde{M}$. Moreover, Statement~3) implies that 
\begin{align*}
\textup{Expansion}(\M)\subseteq \textup{Expansion}(\tilde{\M})=\tilde{M}.
\end{align*}
This concludes the proof of Statement~4).\hfill \qed

\section{Proof of Lemma~\ref{lem:monkey-typewriter}}\label{append:proof-lem:monkey-typewriter}
For any given $x(0)\in \mathbb{R}^n$, due to the definition of weighted-median, we have $x(t)\in \Omega = \{x_1(0),\dots,x_n(0)\}^n$ along any update sequence. Here $\Omega$ is a finite set of at most $n^n$ distinct elements. According to Definition~\ref{def:WM-op-dyn}, at any time $t+1$, one individual is uniformly randomly picked and update their opinion via the weighted-median mechanism. To put in mathematically, for any $x\in \Omega$,
\begin{equation}
\textup{Prob}[x(t+1)=x^{(i)}\,|\,x(t)=x] = 1/n
\end{equation}
for any $x^{(i)}\in \Omega$ satisfying $x^{(i)}_i = \Med_i(x;W)$ and $x^{(i)}_j = x_j$ for any $j\neq i$. Therefore, the weighted-median opinion dynamics is a Markov chain over the finite state space $\Omega$. This Markov chain has absorbing states, i.e., all the equilibria characterized in Theorem~\ref{thm:equilibrium-set}. Moreover, according to the assumption of this lemma, for any $x\in \Omega$, there exists at least one update sequence along which the trajectory $x(t)$ starting from $x$ reaches an equilibrium. Therefore, the weighted-median opinion dynamics is an absorbing Markov chain. According to Theorem~11.3 in the textbook~\cite{CMG-JLS:97}, $x(t)$ starting from $x(0)$ almost surely converges to an equilibrium. Since the stochastic process $x(t)$ is a finite-state Markov chain, $x(t)$ reaches an equilibrium almost surely in finite time.\hfill \qed

\section{Proof of Proposition~\ref{thm:almost-sure-disagreement}}\label{append:proof-prop:almost-sure-disagreement}

``2) $\Rightarrow$ 1)'': Suppose 2) is true, i.e., there exists $x_0\in \mathbb{R}^n$ with $n$ distinct entries and an update sequence $\{i_1,\dots, i_T\}$ such that, along this sequence, $x(t)$ starting from $x(t)=x_0$ achieves consensus at time $T$. Let $r=\min_{i\neq j} |x_{0,i}-x_{0,j}|$. Since the entries of $x_0$ are all distinct, we have $r>0$. Let $X_0=\big{\{} x\in \mathbb{R}^n\,\big|\, \lVert x-x_0 \rVert_{\infty}<r/2 \big{\}}$. For any $\tilde{x}_0\in X_0$, the entries of $\tilde{x}_0$ have the same ordering as the entries of $x_0$ in terms of their values. That is, if $x_0$ satisfies $x_{0,j_1}\le x_{0,j_2}\le \dots \le x_{0,j_n}$, where $\{j_1,\dots, j_n\}$ is a permutation of $\{1,\dots,n\}$, then $\tilde{x}_0$ also satisfies $\tilde{x}_{0,j_1}\le \tilde{x}_{0,j_2}\le \dots \le \tilde{x}_{0,j_n}$. According to the definition of weighted median, for any $x\in \mathbb{R}^n$, if $\Med_i(x;W)=x_j$ for some $j$, then $\Med_i(\tilde{x};W)=\tilde{x}_j$ as long as the entries of $\tilde{x}$ have the same ordering as those of $x$. That is, in the weighted-median opinion dynamics, only the ordering of the individual opinions matters. Therefore, for any $\tilde{x}_0\in X_0$, the trajectory $x(t)$ starting with $x(0)=\tilde{x}_0$ and along the update sequence $\{i_1,\dots, i_T\}$ also achieves consensus at time $T$. Therefore, statement ii) implies statement i). 

``1) $\Rightarrow$ 2)'': Suppose statement 1) is true, i.e., there exists $X_0\subseteq \mathbb{R}^n$ with positive measure in $\mathbb{R}^n$ such that, for any $x_0\in X_0$, there exists an update sequence, along which the trajectory $x(t)$ with $x(0)=x_0$ achieves consensus in finite time. Since $X_0$ has positive measure in $\mathbb{R}^n$, there must exists $x_0\in X_0$ such that the entries of $x_0$ are all distinct. For this $x_0$, there exists an update sequence $\{i_1,\dots, i_T\}$ such that $x(t)$ starting with $x(0)=x_0$ achieves consensus at time $T$. That is, statement 1) implies 2).

``2) $\Rightarrow$ 3)'': Given a vector $x$ satisfying the conditions in statement~2) and the corresponding update sequence $\{i_1,\dots, i_T\}$, since $x(t)\in \{x_1(0),\dots,x_n(0)\}^n$ for any $t\in \mathbb{N}$, there exists $i\in \V$ such that $x_j(T)=x_i$ for any $j$, i.e., $X(T)$ reaches consensus at the value $x_i$. For any $t\in \mathbb{N}$, define 
\begin{align*}
\V_1(t) & = \big\{j\in \V \,\big|\, x_j(t)<x_i  \big\},\\
\V_2(t) & = \big\{j\in \V\,\big|\, x_j(t)>x_i  \big\}.
\end{align*}
Now define $y\in \{-1,0,1\}^n$ as follows: For the index $i$ specified above, let $y_i= 0$. Let $y_j = -1$ for any $j\in \V_1(0)$ and $y_j = 1$ for any $j\in \V_2(0)$. Consider the solution $y(t)$ to the weighted-median opinion dynamics starting with $y(0)=y$ and along the update sequence $\{i_1,\dots, i_T\}$. Define   
\begin{align*}
\U_1(t) & = \big\{j\in \V \,\big|\, y_j(t)=-1  \big\},\\
\U_2(t) & = \big\{j\in \V \,\big|\, y_j(t)=1  \big\}.
\end{align*}
By definition, $\U_1(0)=\V_1(0)$ and $\U_2(0)=\V_2(0)$. Now we prove $\V_1(t)=\U_1(t)$ and $\V_2(t)=\U_2(t)$ for any $t\in \mathbb{N}$. Without loss of generality, suppose $i_1\in \V_1(0)$ (otherwise we can switch the definitions of $\V_1(t)$ and $\V_2(t)$). At time step 1, only node $i_1$ updates her opinion. Since
\begin{align*}
i_1\in &  \V_2(1)  \,\, \Leftrightarrow \,\, \sum_{j\in \V_2(0)}w_{i_1,j} > \frac{1}{2} \,\, \Leftrightarrow \,\, i_1\in \U_2(1);\\
i_1\in & \V_1(1)  \,\, \Leftrightarrow \,\, \sum_{j\in \V_1(0)} w_{i_1,j} \le \frac{1}{2} \,\, \Leftrightarrow \,\, i_1\in \U_1(1);\\
i_1\in & \V\setminus (\V_1(1)\cup \V_2(1))  \,\, \Leftrightarrow \,\,  x_{i_1}(1) = x_i\\
&\qquad \Leftrightarrow \,\, \sum_{j\in \V\setminus \V_1(0)} w_{i_1,j} > \frac{1}{2} \,\, \sum_{j\in \V_2(0)} w_{i_1,j} <\frac{1}{2} \\
&\qquad \Leftrightarrow \,\, y_{i_1}(1) = 0\,\, \Leftrightarrow \,\, i_1\in V\setminus (\U_1(1)\cup \U_2(1)),
\end{align*}
we have $\V_1(1)=\U_1(1)$ and $\V_2(1)=\U_2(1)$. Following the same argument for $t=2,3,\dots$, we conclude that $\V_1(t)=\U_1(t)$ and $\V_2(t)=\U_2(t)$ for any $t$. Therefore,
\begin{align*}
\V_1(T) = \V_2(T) \text{ is empty} \,\, \Leftrightarrow \,\, \U_1(T) = \U_2(T) \text{ is empty},
\end{align*}
that is, $y_j(T)=0$ for any $j\in V$. This concludes the proof for ``statement~2) $\Rightarrow$ statement~3)''.

``3) $\Rightarrow$ 2)'':  Given $y\in \{-1,0,1\}$ satisfying the conditions in statement~3) and given the corresponding update sequence $\{i_1,\dots, i_T\}$, define $\U_1(t)$ and $\U_2(t)$ in the same way as above. For any $x\in \mathbb{R}^n$ that satisfies the following conditions: 1) $x_i=0$; 2) $x_j<0$ for any $j\in \U_1(0)$; 3) $x_j>0$ for any $j\in \U_2(0)$; 4) All the entries of $x$ are distinct. Define $\V_1(t)$ and $\V_2(t)$ in the same way as in last paragraph. Following the same argument in the proof for ``2) $\Rightarrow$ 3)'', we obtain that $\V_1(T)=\U_1(T)$ and $\V_2(T)=\U_2(T)$ are all empty. That is, the vector $x$ we construct satisfies the conditions for $x_0$ in statement~2). This concludes the proof for ``3) $\Rightarrow$ 2)''.
\hfill\qed

Let's cite a paper~\cite{SRE-TB:15}

\bibliographystyle{IEEEtran}
\bibliography{alias,WM,Main,New}

\end{document}